\newtheorem{theorem}{Theorem}
\newtheorem{corollary}{Corollary}
\newtheorem{lemma}{Lemma}
\newtheorem{definition}{Definition}
\def\d{{\rm d}}
\def\E{{\rm E}}
\def\P{{\rm P}}
\newcommand{\var}{\textrm{Var}}
\newcommand{\ind}{\mathbb{I}}
\newcommand{\BR}{\mathbb{R}}
\newcommand{\mF}{\mathscr{F}}
\begin{document}
	
\title{A New Look at Fiducial Inference}
\author{Pier Giovanni Bissiri\footnote{Department of Economics, Management and Statistics, University of Milano-Bicocca, Italy. email:pier.bissiri@unimib.it}, Chris Holmes\footnote{Department of Statistics, University of Oxford, UK. email:chris.holmes@stats.ox.ac.uk}, Stephen Walker\footnote{Department of Mathematics, University of Texas at Austin, USA. email: s.g.walker@math.utexas.edu}}
\date{}

\maketitle

\begin{abstract}
Since the idea of fiducial inference was put forward by Fisher, researchers have been attempting to place it within a rigorous and well motivated framework. It is fair to say that a general definition has remained elusive. 
In this paper we start with a representation of Bayesian posterior distributions provided by Doob that relies on martingales. This is explicit in defining how a true parameter value should depend on a random sample and hence an approach to ``inverse probability'' as originally conceived by Fisher. Taking this as our cue, we introduce a definition of fiducial inference that can be regarded as general.
\end{abstract}

Keywords:	Generalized fiducial inference; Inverse probability; Martingales; Uncertainty quantification.

\section{Introduction}
Fiducial inference was founded on an idea by Fisher regarding inverse probability \citep{Fisher1930}  for producing a Bayesian style of posterior distribution that was motivated and driven solely by the likelihood function of a statistic. A comprehensive history and the state of the fiducial argument at the time is provided by \cite{Zabell1992}, with more recent commentaries available in \cite{zabell2022} and \cite{dawid2024}. 

Our aim is not to reproduce another history; our starting point is the paper of \cite{Hannig2009}, also \cite{Hannig2016}, which introduces a general version of fiducial inference.  
Hannig's 2009 paper gives a principle under which a true parameter value can be obtained from a function, an independent random variable, and a summary statistic representing the data. In this paper, we introduce a sequential version of Hannig's construction, to be explained shortly, is based on the idea of sampling a complete data set, and which can be made to coincide with the Doob definition of a true parameter value
conditioning on data; see \cite{Doob1949}. It is straightforward to make this strategy `prior free' using the ideas presented in \cite{Fong2023}, leading to a general, sequentially coherent, definition of a fiducial distribution.

In particular, we are able to consider the general framework used by \cite{liang2025}. This assumes observations $Y_{1:n}$, a vector of size $n$, is connected to covariates $X_{1:n}$, a $n\times p$ matrix, and parameter $\theta$, a $p$-dimensional vector, and independent random variable $Z_{1:n}$, via
\begin{equation}\label{liang}
Y_{1:n}=H(X_{1:n},Z_{1:n},\theta),
\end{equation}
with inverse function $G$, such that $\theta=G(Y_{1:n},X_{1:n},Z_{1:n})$. See Assumption 1 and equation (1) in \cite{liang2025}. Our development is to recast (\ref{liang}) into a sequential one-step ahead form and to look directly at a sequence of statistics $(T_m)_{m>n}$ for which we assign a distribution conditional on the observed statistic $T_n$.  
So (\ref{liang}) would become
$T_m=H(X_m,Z_m,T_{m-1}),\quad m>n.$
In short, this is replacing the $\theta$ in (\ref{liang}) with estimator $T_{m-1}$. The limit of $T_m$ as $m\to\infty$ represents a posterior sample from our fiducial distribution, provided it exists, and it this existence which forms the theoretical part of the present paper. This approach mirrors the \cite{Doob1949} definition of a posterior distribution using predictive distributions. The advantage here is that we do not require the need for an inverse function, which to the present day, has been one of the most important and yet severe restrictions on extending the applicability of fiducial inference.

\subsection{The set up}
The fundamental problem of inverse probability for statistical inference can be stated as follows: assume that observed data $x_{1:n}$ are obtained independent and identically distributed according to a probability model $f(x \mid  \theta^*)$ indexed by parameter $\theta \in \Theta$, where the true value $\theta^*$ is unknown. The task is to `invert' the probability distribution on $X$ to obtain a conditional distribution on $\theta$, namely $p(\theta \mid x_{1:n})$, which characterises uncertainty in the unknown value of $\theta^*$ arising from the incomplete information due to the finite sample size, $n$. 

The simplest and most widespread solution to inverse probability is to expand the model to a joint probability on $x$ and $\theta$, treating $\theta$ as if it were a random variable. Following which, straightforward application of conditional probability through Bayes Theorem leads to $p(\theta \mid x) \propto f(x \mid \theta) \, p(\theta)$, where $p(\theta)$ is the marginal density for the parameter. This solves the inverse probability problem but at the expense of introducing a prior. Over a 30-year period, Fisher explored fiducial inference as a solution to inverse probability derived solely from the likelihood function, and some additional assumptions such as sufficiency. 

If $t_n$ is the observed value of a sufficient statistic \(T_n\), being some function of the $x_{1:n}$,  and \(F_n(\cdot \mid \theta)\) is the cumulative distribution function of \(T_n\), then regardless of the philosophical motivation espoused by Fisher, the upshot is the recommended fiducial posterior distribution is given by 
\begin{equation}\label{fisher}
	\Pi(\theta\mid t_n)=1-F_n(t_n\mid \theta),
\end{equation}
whenever it exists as a cumulative distribution function. The constraints are obvious, $F_n(t\mid\theta)$ must be monotone in $\theta$ and be $0$ and $1$ for the relevant extreme values of $\theta$. The corresponding fiducial density function is given by
$\pi(\theta\mid t_n)=-\partial F_n/\partial\theta\,(t_n\mid\theta).$
This seems, and is, quite restrictive. 

Noting that it is possible to derive a $\theta$ sample from an inversion of $F(t\mid\theta)$, so write
$F_n(t\mid\theta)=u \iff H(t,u)=\theta,$
it follows that a sample $\theta$ from the fiducial posterior can be achieved as $H(t_n,U)$ where $U$ is a uniform random variable independent of $T_n$.

The Hannig extension of the Fisher set up is to define a random $\theta$, though we now write this as a random $T$ for generality (which could be a random parameter), as
\begin{equation}\label{hannig}
	T=H(t_n,Z)
\end{equation}
where $Z$ is a r.v. independent of $T_n$. This cuts out the need for the restrictive definition involving distribution functions and can be easily extended to multiple dimensions, though at the cost of an interpretation for (\ref{hannig}).

The aim of the paper is to provide an interpretation and hence a general version of (\ref{hannig}) based on Doob's work on the connection between Bayesian posterior distributions and predictive sampling. The connection between these has previously been exploited in Fong et al. (2023), founded on martingales. We, on the other hand, do not need such a martingale set up.

Our extension of Hannig's notion of a fiducial distribution is based on Doob, and is defined sequentially by
\begin{equation}\label{eq:recursion}
	T_{m+1}=H_m(T_m,Z_m),\quad m\geq n,
\end{equation}
where the $Z_m$ and $T_m$ are independent r.v.
Assuming $T_m$ has a limit $T_\infty$, to exist a.s., it is the $T_\infty$ which is a sample from our fiducial  distribution; i.e.
$T_\infty=\lim_{m\to\infty}T_m$. We can and do include covariate information later on, though for now we define the fiducial distribution without them:

\begin{definition}
	The Doob Fiducial distribution is the distribution of the limit $T_\infty$ which is constructed from (\ref{eq:recursion}) and started at the observed $m=n$ with $T_n=t_n$. Here the $(Z_m)$ are an independent sequence of random variables and the $H_m$ a known sequence of functions.
\end{definition}

\subsection{Demonstrative illustration}
Here we show in a simple illustration how the Fisher fiducial distribution differs from our own. 
As an illustration, if $a>0$ is known and
$f(x\mid \theta)=x^{a-1}\theta^a\,e^{-\theta x}/\Gamma(a),$
then 
$X=G_\theta(Z)$ where $G_\theta(z)=z/\theta$ and $Z$ is a $\mbox{Ga}(a,1)$ random variable. The fiducial density for $\theta$ based on a sample of size $n$ with statistic $t_n=\sum_{i=1}^n x_i/n$ is given by
$\pi(\theta\mid t_n)=\mbox{Ga}(\theta\mid na,nt_n)$.

On the other hand, our approach 
constructs $$f(x_{n+1:N}\mid x_{1:n})=\prod_{m=n+1}^N f(x_m\mid x_{1:m-1})$$
for an arbitrarily large $N$,
where the conditional density functions are based on sampling and updating using the current MLE; i.e.
$$X_{m+1}=Z_m\,T_m/a\quad\mbox{and}\quad T_m=\sum_{i=1}^m X_i/m.$$
We produce the sequence $(T_{m})_{m>n}$, where $T_m=m^{-1}\sum_{i=1}^m X_i$ with
$X_{m+1}=Z_m\,T_m/a$, and the $(Z_m)$ are independent $\mbox{Ga}(a,1)$  variables.
Hence, $T_{m+1}=T_m(m+Z_m/a)/(m+1)$, so completing the term from Definition 1, we have
$H_m(t,z)=t(m+z/a)/(m+1)$. The fiducial distribution is the distribution of the limit $T_\infty$. The rationale and theory behind this strategy will be gone through in the remainder of the paper.

Let us now describe the layout of the paper. 
In section 2 we present some further background to the fiducial distribution of Fisher and provide an interpretation of it which relies on the bootstrap of Efron. This demonstrates a lack of rigor with the idea and indeed highlights Fisher's fiducial distribution as a heuristic approach. We also show how Doob's interpretation of Bayes can be used to define a well motivated fiducial distribution, the key to which is a sequential development of connections between samples and parameters. Section 3 provides the theoretical results which establish conditions under which the Doob fiducial distribution exists. Section 4 contains some  illustrations to make clear the novelty and outline of the new fiducial distribution. These examples include a normal, exponential model and a Weibull and uniform model. 
Section 5 deals with nonparametric regression. An illustration related to logistic regression is provided. 
Finally, section 6 concludes with a brief discussion.


\section{Background}

Before we provide further explanation on the Doob fiducial approach, we show how it is possible to interpret the Fisher fiducial approach as a parametric bootstrap in the spirit of Efron; see \cite{Efron2012}. 
Following the original ideas of Fisher, the construction of the fiducial distribution can be shown to be based on the notion that if 
$T_n(\theta)$ is a statistic produced by taking $n$ i.i.d. samples from $f(\cdot\mid\theta)$, and the observed $t_n$ is arising in the same way with $\theta^*$, then the inequality 
$T_n(\theta)\leq t_n(\theta^*)=t_n$
is suggestive of $\theta< \theta^*$.
Of course this is not true due to stochastic variation. But as a guide, Fisher uses this to derive
$\P(T> \theta\mid t_n)=\P(T_n(\theta)\leq t_n)=F_n(t_n\mid\theta),$
being $T$ the random counterpart of $\theta^*$. 
This can now be converted into the fiducial distribution as in (\ref{fisher}).

To demonstrate this bootstrap interpretation, we consider a Bernoulli model. 
The fiducial distribution is
$$\Pi(\theta\mid t_n)=1-\sum_{k=0}^{t_n} \binom{n}{k} \theta^k(1-\theta)^{n-k}.$$
Such a fiducial distribution is well defined provided that \(t_n<n\) and in such case is Beta$(1+t_n, n-t_n)$. One can see this recalling the binomial expansion of the regularized incomplete beta function, given for instance by \cite{Abram64} in their formula 6.6.4 on page 263, namely:
\[
\frac{\int_0^p t^{m-1}(1-t)^{n-m}\, \d t}{
	\int_0^1 t^{m-1}(1-t)^{n-m}\, \d t}
= \sum_{j=m}^n \binom{n}{j}p^j(1-p)^{n-j},
\]
which holds for \(0<p<1\), \(m=1,\dotsc,n\), and \(n=1,2,\dotsc\).

We can achieve this by taking repeated bootstrap samples $x_{1:n}(\theta)$ as i.i.d. Ber$(\theta)$ for each $\theta$. 
Set $T_n(\theta)$ to be the $\sum_{i=1}^n x_i(\theta)$. Then $T_n(\theta)> t_n$ is evidence of $\theta>\theta^*$. 
So repeating this $B$ times, an Efron style bootstrap estimator for the distribution of $T$ is
$$\Pi^{(B)}(\theta\mid t_n)=B^{-1}\sum_{b=1}^B 1\bigg(T_n^{(b)}(\theta)>t_n\bigg),$$
which for large $B$ converges to $\Pi(\theta\mid t_n)$.

Another similar example can be made with the Poisson distribution. 
Consider the Poisson model and again let 
\(t_n=\sum_{i=1}^n x_i\). In this setting, 
\begin{equation}\label{eq: poisson}
	\Pi(\theta\mid t_n) = 
	\sum_{k=t_n+1}^\infty e^{-n\theta} \frac{(n\theta)^k}{k!}
	= \frac{n^{t_n+1}}{\Gamma(t_n+1)}
	\int_0^\theta e^{-nx}\, x^{t_n}
	\,\d x
\end{equation}
so that the fiducial distribution is Gamma\((1+t_n,n)\). 
The second identity in \eqref{eq: poisson} is a consequence 
of a well known relationship between the Gamma and the Poisson distribution \citep[][formula 4.96, page 197]{Johnson05}, namely \(\P(X \leq x)=\P(Y\geq m)\) if 
$X$ is Gamma$(m,\beta)$, being $m$ an integer, $\beta>0$ and  $Y$ is Poisson$(\beta x)$ with $x>0$.

The motivation behind the Hannig extension is grounded in the notion that it is often possible to write
$T_n=G_{Z}(\theta^*)$ 
where $Z$ is a random variable whose distribution does not depend on the true value $\theta^*$ of the parameter. 
In this case, assuming an inversion is possible, either exactly or numerically, one has
$\theta^*=G^{-1}_Z(T_n).$
The fiducial trick here is to switch what is random and what is fixed. So the idea is that $T_n$ is replaced by its observed value $t_n$ and becomes fixed whereas $\theta^*$ is replaced by its random counterpart $T$. So, the fiducial distribution is the distribution of $T=G_Z^{-1}(t_n)$. 

Hence, we see that both Fisher's original definition and the Hannig extension have some interpretation concerns. On the other hand, \cite{Doob1949} has adequately explained how $\theta^*$ can be recovered from an infinite sample, which is where we start our version of the fiducial distribution. So Doob showed, for identifiable parameters, that it is possible to write $\theta^*=H(X_{1:\infty})$. This is based on the fact that 
$\lim_{n\to\infty} 
\frac{1}{n} \sum_{i=1}^n 1(X_i\leq x) 
=F(x\mid\theta^*)\,\mbox{a.s.}$
So $\theta^*$ is available from the infinite random sample.

Since $x_{1:n}$ are observed and $t_n$ is a sufficient statistic, it follows that for all $m$ it is that 
\begin{equation}\label{doob}
	T=\widetilde{H}_m(T_m,X_{m+1:\infty})
\end{equation}
for some $\widetilde{H}_m$. 
An appropriate choice 
for the conditional distribution of  $X_{m+1:\infty}$ given $T_m=t_m$ is such that for every \(N>m\) the conditional density of $x_{m+1:N}$ given $T_m=t_m$ is 
$\prod_{l=m+1}^N f(x_{l}\mid t_{l-1}).$
A sequential version of (\ref{doob}) is therefore given by
$T_{m+1}=H_m(T_m,Z_m)$ for some sequence of functions $(H_m)$
where the assumption is that $Z_m$ is independent of $T_m$; i.e. the $X_{1:m}$.

\section{General Models}

In this section we consider the convergence of the sequence (\ref{eq:recursion}) to a limit random variable $T_\infty$. 
In the following we consider \(H_m(t, z) = \varphi^{-1}(\varphi(t) + g_m(t,z))\). This is a very general expression for \(H_m\). For example, for the gamma model \(\varphi(t)=\log(t)\) and \(g_m(t,z)=\log((m+z/a)/(m+1))\).

\begin{theorem}\label{th: gen}
	Assume that there exist a function \(\varphi\) of a single variable  and a function \(g\) of two variables such that:
	\begin{enumerate}[i)]
		\item \(\varphi\) is an invertible function with a closed range and \(\varphi^{-1}\) is continuous
		\item the following identity holds for every \(t\) and \(z\): 
		\begin{equation}\label{eq: cond_gen}
			H_m(t, z) = \varphi^{-1}(\varphi(t) + g_m(t,z))
		\end{equation}
		\item the following two random series converge almost surely:
		\begin{equation}\label{eq: series_cond}
			\begin{split}
				\sum_{m=1}^\infty 
				&\E(g_m(T_m,Z_m)\mid T_m), \\ 
				\sum_{m=1}^\infty 
				&\E(g_m(T_m,Z_m)^2\mid T_m),
			\end{split}
		\end{equation}
		\item \((T_m)_{m=1}^\infty\) and \((Z_m)_{m=1}^\infty\) are two random sequences such that the sequence 
		\((T_1,Z_1, Z_2, \dotsc)\) is an independent sequence, 
		and the following recursive identity holds
		\begin{equation}\label{eq: recursion2}
			T_{m+1}=H_m(T_m,Z_m),\quad m\geq 1,
		\end{equation}
	\end{enumerate}
	Then there exists a random variable \(T\) such that \(T_m\to T\), as \(m\to \infty\), almost surely.
\end{theorem}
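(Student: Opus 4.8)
\emph{Proof sketch.} The plan is to linearise the recursion through $\varphi$. Writing $S_m:=\varphi(T_m)$, hypotheses (ii) and (\ref{eq: recursion2}) give the additive form $S_{m+1}=S_m+g_m(T_m,Z_m)$, so $S_m=S_1+\sum_{k=1}^{m-1}g_k(T_k,Z_k)$. Every $S_m$ lies in the range of $\varphi$, which is closed by (i), and $\varphi^{-1}$ is continuous on that range; hence it suffices to prove that $\sum_{k\ge1}g_k(T_k,Z_k)$ converges almost surely. For then $S_m\to S_\infty$ with $S_\infty$ still in the range of $\varphi$, and $T_m=\varphi^{-1}(S_m)\to\varphi^{-1}(S_\infty)=:T$ by continuity. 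The rest of the argument is organised around establishing that series convergence.

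First I would set up the natural filtration. Iterating (\ref{eq: recursion2}) shows $T_m$ is a measurable function of $(T_1,Z_1,\dots,Z_{m-1})$, and $(T_1,Z_1,Z_2,\dots)$ is independent by (iv); so with $\mathcal{F}_m:=\sigma(T_1,Z_1,\dots,Z_{m-1})$ the variable $T_m$ is $\mathcal{F}_m$-measurable while $Z_m$ is independent of $\mathcal{F}_m$. A freezing argument then gives $\E(g_m(T_m,Z_m)\mid\mathcal{F}_m)=\E(g_m(T_m,Z_m)\mid T_m)=:h_m(T_m)$. Split $g_k(T_k,Z_k)=A_k+B_k$ with $A_k:=h_k(T_k)$ and $B_k:=g_k(T_k,Z_k)-h_k(T_k)$. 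The series $\sum_kA_k$ is exactly the first series in (\ref{eq: series_cond}), hence converges a.s.

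For $\sum_kB_k$ I would use a martingale argument. The partial sums $M_n:=\sum_{k=1}^nB_k$ form a martingale with respect to $\mathcal{G}_n:=\mathcal{F}_{n+1}$, since each $B_k$ is $\mathcal{F}_{k+1}$-measurable with $\E(B_k\mid\mathcal{F}_k)=0$; $M_0=0$, and the predictable quadratic variation is
\[
\langle M\rangle_n=\sum_{k=1}^n\E(B_k^2\mid\mathcal{F}_k)=\sum_{k=1}^n\var\big(g_k(T_k,Z_k)\mid T_k\big)\le\sum_{k=1}^n\E\big(g_k(T_k,Z_k)^2\mid T_k\big),
\]
which is bounded in $n$ almost surely by the second series in (\ref{eq: series_cond}); thus $\langle M\rangle_\infty<\infty$ a.s. I would then invoke the almost sure convergence of a locally square-integrable martingale on $\{\langle M\rangle_\infty<\infty\}$: $\langle M\rangle$ is $\mathcal{G}$-predictable, so $\tau_c:=\inf\{n:\langle M\rangle_{n+1}>c\}$ is a stopping time; the stopped martingale obeys $\E\big((M^{\tau_c}_n)^2\big)=\E\big(\langle M\rangle_{n\wedge\tau_c}\big)\le c$ for all $n$ (since $\tau_c\ge k$ forces $\langle M\rangle_k\le c$), so $M^{\tau_c}$ is $L^2$-bounded and converges a.s.; since $\{\langle M\rangle_\infty<\infty\}=\bigcup_{c\in\BN}\{\tau_c=\infty\}$, $M_n$ converges a.s. Summing the two pieces, $\sum_kg_k(T_k,Z_k)=\sum_kA_k+\lim_nM_n$ converges a.s., and the first paragraph then gives the claim.

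The step I expect to be the main obstacle --- more precisely, the one demanding care rather than bookkeeping --- is this last martingale convergence. We are only given $\langle M\rangle_\infty<\infty$ \emph{almost surely}, not $\E\langle M\rangle_\infty<\infty$, and the $B_k$ need not be globally square-integrable, so one cannot quote the $L^2$-bounded martingale convergence theorem directly but must run the localisation at $\tau_c$ and check the uniform bound $\langle M\rangle_{n\wedge\tau_c}\le c$. A secondary point is the interchange between conditioning on $T_m$ (as written in (\ref{eq: series_cond})) and conditioning on $\mathcal{F}_m$, which is where independence of $(T_1,Z_1,Z_2,\dots)$ in (iv) is essential; and the closed-range hypothesis in (i) is precisely what keeps the limit $S_\infty$ in the domain of $\varphi^{-1}$.
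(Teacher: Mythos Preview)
Your proposal is correct and follows essentially the same route as the paper: linearise via $S_m=\varphi(T_m)$, reduce conditioning on $\mathcal{F}_m$ to conditioning on $T_m$ by the freezing lemma, split $g_k(T_k,Z_k)$ into its conditional mean plus a martingale-difference term, control the first by hypothesis (iii) and the second by the fact that a square-integrable martingale converges a.s.\ on $\{\langle M\rangle_\infty<\infty\}$, then invoke continuity of $\varphi^{-1}$ on the closed range. The only cosmetic difference is that the paper packages the martingale step by citing Durrett's Theorem~4.5.2 (and a short corollary deriving $\sum Y_m$ convergence from a.s.\ convergence of $\sum\E(Y_m\mid\mathcal{F}_{m-1})$ and $\sum\E(Y_m^2\mid\mathcal{F}_{m-1})$), whereas you prove that step inline via the stopping-time localisation at $\tau_c$; your version is slightly more self-contained and avoids the global $L^2$ assumption the cited corollary carries.
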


A general multivariate parameter form for this, with one dimensional data, has, for $j=1,\ldots,d$,
$T_{m+1,j}=H_{m,j}(T_m,Z_m),$
where $T_m=(T_{m,1},\ldots,T_{m,d})$ and $Z_m$ is a one dimensional innovation variable for which 
$T_{m+1,j}=\varphi_j^{-1}(\varphi_j(T_{m,j})+g_m(T_m,Z_m)).$

\subsection{Proof of Theorem \ref{th: gen}}

We resort to the following result whose proof is given for instance by \cite{Durrett19} on page 208, Theorem 4.5.2.  
\begin{theorem}\label{th: durrett}
	Assume that \((X_n)_{n=0}^\infty\) is a martingale with respect a filtration \((\mF_n)_{n=0}^\infty\) such that \(X_0=0\) and \(\E(X_n^2)<\infty\) for all \(n\). Moreover, let \((A_n)_{n=1}^\infty\) be the increasing process associated with \((X_n)_{n=0}^\infty\), namely:
	\begin{equation}\label{eq: A_n}
		A_n= \sum_{m=1}^n \E(X_m^2\mid \mF_{m-1})-X_{m-1}^2
		=  \sum_{m=1}^n \E((X_m-X_{m-1})^2\mid \mF_{m-1}),
	\end{equation}
	and let \(A_\infty = \lim_{n\to \infty} A_n\).
	
	Then \(\lim_{n\to \infty} X_n\) exists and is finite almost surely on \(\{A_\infty <\infty\}\).
\end{theorem}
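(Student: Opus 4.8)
The plan is to establish convergence on $\{A_\infty < \infty\}$ by a localization argument that reduces the claim to the $L^2$-bounded case, where almost-sure convergence is classical. The engine will be two facts. First, $X_n^2 - A_n$ is itself a martingale, i.e. $(A_n)$ is the compensator of $(X_n^2)$: reading off from \eqref{eq: A_n} that $A_m - A_{m-1} = \E(X_m^2\mid\mF_{m-1}) - X_{m-1}^2$, I would check $\E(X_m^2 - A_m\mid\mF_{m-1}) = X_{m-1}^2 - A_{m-1}$, so $(X_n^2 - A_n)$ is a martingale starting at $0$. Second, an $L^2$-bounded martingale converges almost surely (the $L^2$ martingale convergence theorem, which follows from Doob's maximal inequality together with the orthogonality of martingale increments); this I would invoke as a standard building block.

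The crucial structural observation is that $(A_n)$ is predictable, $A_n$ being $\mF_{n-1}$-measurable since each summand in \eqref{eq: A_n} is, and nondecreasing, since the increments $\E((X_m-X_{m-1})^2\mid\mF_{m-1})$ are nonnegative. This lets me cut the process off before $A$ grows too large. For a fixed level $a>0$ I would set
\[
N_a = \inf\{n\ge 0 : A_{n+1} > a\}.
\]
Predictability gives $\{N_a\le n\} = \{A_{n+1}>a\}\in\mF_n$ (using monotonicity of $A$), so $N_a$ is a genuine stopping time, and by construction the stopped bracket satisfies $A_{n\wedge N_a}\le a$ for every $n$. The stopped process $(X_{n\wedge N_a})_n$ is again a martingale, and applying optional stopping to the martingale $X^2 - A$ at the bounded time $n\wedge N_a$ (recall $X_0=0$, $A_0=0$) yields $\E(X_{n\wedge N_a}^2) = \E(A_{n\wedge N_a})\le a$. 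Hence $(X_{n\wedge N_a})_n$ is bounded in $L^2$ uniformly in $n$, and therefore converges almost surely as $n\to\infty$.

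It then remains to strip off the truncation on the target event. On $\{A_\infty\le a\}$ one has $A_{n+1}\le a$ for all $n$, so $N_a=\infty$ and thus $X_{n\wedge N_a}=X_n$ for every $n$; the convergence just obtained therefore gives a.s. convergence of $X_n$ itself on $\{A_\infty\le a\}$. Finally I would let $a$ run through the integers and use $\{A_\infty<\infty\} = \bigcup_{a\in\BN}\{A_\infty\le a\}$, a countable union of null-complement events, to conclude that $\lim_n X_n$ exists and is finite almost surely on $\{A_\infty<\infty\}$.

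The main obstacle, and the step that does the real work, is arranging the truncation so that the stopped bracket stays bounded by $a$: this hinges entirely on the predictability of $(A_n)$, which is what makes $N_a$ a stopping time for $(\mF_n)$ and forces the cutoff to occur one step before $A$ can exceed $a$. Once that is secured, the identity $\E(X_{n\wedge N_a}^2) = \E(A_{n\wedge N_a})$ and the passage through the events $\{A_\infty\le a\}$ are routine.
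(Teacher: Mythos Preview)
Your argument is correct and is essentially the standard proof: the paper does not prove this theorem itself but cites it as Theorem 4.5.2 in \cite{Durrett19}, and your localization via the predictable stopping time $N_a=\inf\{n:A_{n+1}>a\}$, the identity $\E(X_{n\wedge N_a}^2)=\E(A_{n\wedge N_a})\le a$, and the passage through $\bigcup_a\{A_\infty\le a\}$ reproduce Durrett's proof almost verbatim.
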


An immediate consequence of Theorem \ref{th: durrett} is the following Corollary:
\begin{corollary}\label{cor}
	Let \((\mF_n)_{n=0}^\infty\) be a filtration and let \((Y_n)_{n=1}^\infty\) be a sequence of square integrable random variables such that \(Y_n\) is measurable with respect to \(\mF_n\), for every \(n\geq 1\). If the following two random series converge almost surely:
	\begin{equation}\label{eq: cor_cond}
			\sum_{m=1}^\infty 
			\E(Y_{m}\mid \mF_{m-1})\quad\mbox{and}\quad
			\sum_{m=1}^\infty 
			\E(Y_{m}^2\mid \mF_{m-1}),
	\end{equation}
	then \(\sum_{n=1}^\infty Y_n\) converges almost surely. 
\end{corollary}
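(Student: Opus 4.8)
The plan is to reduce the claim to Theorem \ref{th: durrett} by centering the $Y_n$. Set $D_m = Y_m - \E(Y_m\mid\mF_{m-1})$ and $X_n = \sum_{m=1}^n D_m$, with $X_0 = 0$. First I would verify that $(X_n)_{n=0}^\infty$ is a square-integrable martingale with respect to $(\mF_n)_{n=0}^\infty$: each $D_m$ is $\mF_m$-measurable with $\E(D_m\mid\mF_{m-1}) = 0$, and $D_m$ lies in $L^2$ because $Y_m\in L^2$ by hypothesis while $\E\big[\E(Y_m\mid\mF_{m-1})^2\big] \le \E(Y_m^2) < \infty$ by the conditional Jensen inequality; hence the finite sum $X_n$ is square integrable.

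Next I would identify the increasing process of Theorem \ref{th: durrett}. Since the $D_m$ are martingale increments, $\E\big((X_m-X_{m-1})^2\mid\mF_{m-1}\big) = \E(D_m^2\mid\mF_{m-1}) = \E(Y_m^2\mid\mF_{m-1}) - \E(Y_m\mid\mF_{m-1})^2 \le \E(Y_m^2\mid\mF_{m-1})$. Summing over $m$, $A_\infty = \sum_{m\ge 1}\E(D_m^2\mid\mF_{m-1}) \le \sum_{m\ge 1}\E(Y_m^2\mid\mF_{m-1})$, which is finite almost surely by the second assumed convergence in \eqref{eq: cor_cond}. Thus $\{A_\infty<\infty\}$ has probability one, and Theorem \ref{th: durrett} gives that $X_n$ converges to a finite limit almost surely.

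Finally I would recompose the partial sums: $\sum_{m=1}^n Y_m = X_n + \sum_{m=1}^n \E(Y_m\mid\mF_{m-1})$, where the first term converges almost surely by the previous step and the second converges almost surely by the first assumed convergence in \eqref{eq: cor_cond}; hence $\sum_{n\ge 1}Y_n$ converges almost surely, as required.

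The calculations here are entirely routine; the only point needing a little care — the ``main obstacle,'' such as it is — is the integrability bookkeeping required to legitimately invoke Theorem \ref{th: durrett}: that centering preserves square-integrability (via conditional Jensen), so that $A_n$ in \eqref{eq: A_n} is well defined, and that the compensator $A_\infty$ is dominated term-by-term by the second series so that the event $\{A_\infty<\infty\}$ is almost sure. No pathology can arise, since each conditional expectation appearing is taken of an $L^1$ or $L^2$ random variable.
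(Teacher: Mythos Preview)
Your proof is correct and follows essentially the same route as the paper: center the $Y_m$ to form the martingale $X_n=\sum_{m\le n}\{Y_m-\E(Y_m\mid\mF_{m-1})\}$, bound $A_\infty=\sum_m\var(Y_m\mid\mF_{m-1})\le\sum_m\E(Y_m^2\mid\mF_{m-1})<\infty$ a.s., invoke Theorem~\ref{th: durrett}, and then add back the compensator series. Your write-up is in fact slightly more careful than the paper's on the $L^2$ bookkeeping (the conditional Jensen step).
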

\begin{proof}[Proof of Corollary \ref{cor}]
	Let \(X_0=0\) and 
	\begin{equation}\label{eq: Xn}
		X_n= \sum_{m=1}^n \{Y_m-\E(Y_m\mid \mF_{m-1})\}
	\end{equation}
	for \(n=1,2,\dotsc\)
	The random sequence \((X_n)_{n=1}^\infty\) is a square integrable martingale such that 
	$
	X_{n}-X_{n-1}= Y_n- \E(Y_n\mid \mF_{n-1}),
	$
	for \(n=1,2,\dotsc\), 
	and therefore the increasing process \((A_n)_{n=1}^\infty\) associated with \((X_n)_{n=1}^\infty\) is given by 
	$
	A_n=\sum_{m=1}^n \var(Y_m\mid \mF_{m-1}).
	$
	Being the second series in \eqref{eq: cor_cond} almost surely convergent, we obtain:
	\begin{equation*}
		A_\infty=\sum_{m=1}^\infty \var(Y_m\mid \mF_{m-1})
		\leq \sum_{m=1}^\infty 
		\E(Y_{m}^2\mid \mF_{m-1}) <\infty.
	\end{equation*}
	Therefore, we can apply Theorem \ref{th: durrett} to obtain that \(X_n\) converges almost surely. 
	By \eqref{eq: Xn}, this means that the following random series 
	\begin{equation*}
		X_\infty = 
		\sum_{m=1}^\infty \{Y_m-\E(Y_m\mid \mF_{m-1})\}    
	\end{equation*}
	converges almost surely. Combining this fact with the almost sure convergence of the first random series in \eqref{eq: cor_cond} we obtain that 
	the random series 
	\[\sum_{m=1}^\infty Y_m=
	X_\infty + 
	\sum_{m=1}^\infty \E(Y_m\mid \mF_{m-1})\]
	converges almost surely. 
\end{proof}

The following is a well known result, whose proof is given for instance by \cite{Durrett19} on page 182, Example 4.1.7.

\begin{lemma}\label{lemma}
	Suppose that \(X\) and \(Y\) are two independent random variables or random vectors. Let \(\xi\) be a function such that \(\E(|\xi(X,Y)|)<\infty\) and let \(\phi(x)=\E(\xi(x,Y))\). Then
	$\E(\xi(X,Y)\mid X) = \phi(X).$
\end{lemma}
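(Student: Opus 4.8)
The plan is to verify directly that \(\phi(X)\) satisfies the two defining properties of the conditional expectation \(\E(\xi(X,Y)\mid X)\): it must be \(\sigma(X)\)-measurable and integrable, and it must satisfy the averaging identity \(\E(\xi(X,Y)\,\ind_A)=\E(\phi(X)\,\ind_A)\) for every \(A\in\sigma(X)\). Since every such \(A\) has the form \(A=\{X\in B\}\) for a Borel set \(B\), it will suffice to check \(\E(\xi(X,Y)\,\ind_{\{X\in B\}})=\E(\phi(X)\,\ind_{\{X\in B\}})\) for all \(B\).

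The engine for both requirements is the fact that independence of \(X\) and \(Y\) makes the joint law of \((X,Y)\) equal to the product measure \(\mu_X\otimes\mu_Y\) of the two marginals. First I would invoke Fubini--Tonelli: applied to this product measure, it guarantees that the partial integral \(\phi(x)=\int\xi(x,y)\,\mu_Y(\d y)=\E(\xi(x,Y))\) is a measurable function of \(x\), while the hypothesis \(\E|\xi(X,Y)|<\infty\), rewritten as \(\int\!\int|\xi(x,y)|\,\mu_Y(\d y)\,\mu_X(\d x)<\infty\), shows that \(\phi\) is finite for \(\mu_X\)-almost every \(x\) and is \(\mu_X\)-integrable. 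Consequently \(\phi(X)\) is \(\sigma(X)\)-measurable and integrable, settling the first requirement.

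For the averaging identity I would again use the product structure together with Fubini: for any Borel set \(B\),
\[
\E(\xi(X,Y)\,\ind_{\{X\in B\}})=\int_B\!\int\xi(x,y)\,\mu_Y(\d y)\,\mu_X(\d x)=\int_B\phi(x)\,\mu_X(\d x)=\E(\phi(X)\,\ind_{\{X\in B\}}),
\]
where the outer integral is restricted to \(B\) because \(\ind_{\{X\in B\}}\) depends on \(x\) alone. This establishes the defining identity, and hence, by the almost-sure uniqueness of conditional expectation, \(\E(\xi(X,Y)\mid X)=\phi(X)\).

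The main obstacle is the measurability of \(\phi\), which is not automatic and is precisely the nontrivial content hidden inside Fubini's theorem. If one prefers not to quote Fubini as a black box, the alternative is the standard-machinery argument: verify the claim first for product indicators \(\xi(x,y)=\ind_C(x)\,\ind_D(y)\), where it is immediate from independence; extend by linearity to simple functions on the product space; pass to nonnegative measurable \(\xi\) by monotone convergence, at which stage the measurability of \(\phi\) is inherited as a monotone limit of measurable functions; and finally reach integrable \(\xi\) by splitting \(\xi=\xi^{+}-\xi^{-}\) and using the finiteness supplied by the integrability hypothesis. Care is needed only to ensure that \(\phi(X)\) is well defined almost surely, which the assumption \(\E|\xi(X,Y)|<\infty\) guarantees.
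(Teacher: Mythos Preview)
Your proof is correct. The paper does not supply its own argument for this lemma; it simply records it as well known and points to Durrett (2019), Example~4.1.7. Your Fubini-based verification of the two defining properties of conditional expectation is exactly the standard proof given there, so there is nothing to contrast.
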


We are now ready to prove Theorem \ref{th: gen}. 
Combining \eqref{eq:recursion} and \eqref{eq: cond_gen} we have that 
$
\varphi(T_{m+1})-\varphi(T_m) = g_m(T_m, z_m)   
$
for every \(m\geq n\). Therefore, 
\begin{equation}\label{eq: phiTm}
		\varphi(T_m) = \varphi(T_n) + 
		\sum_{\ell=n}^{m-1}  
		\{\varphi(T_{\ell+1})-\varphi(T_\ell)\}
		=\varphi(T_n) + 
		\sum_{\ell=n}^{m-1} g_\ell(T_\ell, z_\ell), 
\end{equation}
for every \(m\geq n\).

For every \(m\geq 1\), let \(\mF_m\) be the sigma-field generated by the  \(T_1, Z_1, \dotsc, Z_m\). 
Applying inductively the recursion \eqref{eq: recursion2}, one can see that \(T_m\) is a function of \(T_1, Z_1, \dotsc, Z_{m-1}\), say 
\(T_m=h_m(T_1, Z_1, \dotsc, Z_{m-1})\) for some function \(h_m\), 
and therefore \(T_m\) and \(Z_m\) are independent. 
Hence, setting 
$
\phi_m(t)=\E(g_m(t,Z_m)),
$
for every \(t\), by Lemma \ref{lemma} we have that 
\begin{equation}\label{eq: proof-cor1}
	\begin{split}
		\E(g_m(T_m,Z_m)&\mid \mF_{m-1})\\
		&= \E(g_m(h_m(T_1, Z_1, \dotsc, Z_{m-1}),Z_m)
		\mid T_1, Z_1, \dotsc, Z_{m-1})\\
		&= \phi_m(T_m),
	\end{split}
\end{equation}
and applying again Lemma \ref{lemma} we have that
\begin{equation}\label{eq: proof-cor2}
	\E(g_m(T_m,Z_m)\mid T_m)=\phi_m(T_m).
\end{equation}
Combining \eqref{eq: proof-cor1} and 
\eqref{eq: proof-cor2} we have that
\[
\E(g_m(T_m,Z_m)\mid \mF_{m-1})
= \E(g_m(T_m,Z_m)\mid T_m),
\]
and similarly we can show that 
\[\E(g_m(T_m,Z_m)^2\mid \mF_{m-1})
= \E(g_m(T_m,Z_m)^2\mid T_m).\] 
So the two random series \eqref{eq: series_cond} are both equal to the
\(\sum_{m=1}^\infty 
\E(g_m(T_m,Z_m)\mid \mF_m),\) and  
\(\sum_{m=1}^\infty \E(g_m(T_m,Z_m)^2\mid \mF_m)\), respectively. 
Since the two random series converge almost surely, we can apply Corollary \ref{cor} setting 
\(Y_m=g_m(T_m,Z_m)\). This proves that 
the random series 
$
\sum_{m=n}^{\infty} g_m(T_m, Z_m)
$
converges almost surely. 
Therefore, by \eqref{eq: phiTm}, \(\varphi(T_m)\) 
converges almost surely to some random variable 
\(W\). Being \(\varphi^{-1}\) continuous by assumption, we can apply the continuous mapping theorem to obtain that \(T_m\) converges almost surely to \(\varphi^{-1}(W)\) and the proof of Theorem \ref{th: gen} is complete.

\section{Illustrations}

Here we present illustrations with the aim of demonstrating the Doob fiducial distributions based on the representation of posterior samples. We have modified Doob's predictive distributions to leave out a prior construction while maintaining the key property of a martingale, which is all Doob required to establish the existence of a posterior from predictive sampling.  

\subsection{Normal model}\label{ss:normal} A normal model with unknown mean $\theta$ and known variance $\sigma$, and observed $y_{1:n}$, so let $t_n=\bar{y}$. Then, according to the sampling scheme for $Y_{n+1:\infty}$
where $Y_{m+1}\mid T_m \sim N(T_m,\sigma^2)$, it is that
$T_{m+1}=T_m+Z_m\,\sigma/(m+1)\,,\quad m\geq n,$
where the $(Z_m)$ are independent standard normal. 
So
$$T=t_n+\sigma\,Z\,\sqrt{\sum_{m=n+1}^\infty 1/m^2},$$
where $Z$ is a standard normal random variable.
Hence, the distribution of 
$h(t_n,Z)$ is our fiducial distribution where:
$$h(t,z)=t+\sigma\,z\,\sqrt{\sum_{m=n+1}^\infty 1/m^2}.$$

It is possible to bound the series appearing in this fiducial distribution using telescopic series. Indeed, 
\begin{align*}
	\sum_{m=n+1}^\infty \frac{1}{m^2} &<  
	\sum_{m=n+1}^\infty \frac{1}{m(m-1)} =
	\sum_{m=n+1}^\infty \left(\frac{1}{m-1}-\frac{1}{m}\right) = 
	\frac{1}{n}\\
	\sum_{m=n+1}^\infty \frac{1}{m^2} &>  
	\sum_{m=n+1}^\infty \frac{1}{m(m+1)} =
	\sum_{m=n+1}^\infty \left(\frac{1}{m}-\frac{1}{m+1}\right) = 
	\frac{1}{n+1}\\
\end{align*}
This implies that our fiducial distribution is close to the one of Fisher which is \(N(t_n, \sigma^2/n)\), especially if \(n\) is large.

For a multivariate case, consider the case when $\theta$ and $\sigma$ are unknown. The start is $T_n$ to be the sample mean and $S_n^2$ to be the sample variance.
Then for a general $m$, 
$Y_{m+1}=T_m+S_m\,Z_m$
where $(Z_m)$ is a sequence of independent standard normal random variables.  
The updates are
$T_{m+1}=T_m+S_m\,Z_m/(m+1)$
and
$S_{m+1}^2=S_m^2+S_m^2\left\{Z_m^2/(m+1)-1/m\right\}.$
So
$$g_{m,1}(t,z)=z\,\sqrt{t_2}/(m+1)\quad\mbox{and}\quad g_{m,2}(t,z)=t_2\,(z^2/(m+1)-1/m),$$
where $t=(t_1,t_2)$, the components being the sample mean and the sample variance.
Now, given two random variables $T=(T_1,T_2)$ and $Z$, we have that
$$\E(g_{m,1}(T,Z)\mid T=t)=0\quad\mbox{and}\quad 
\E(g_{m,2}(T,Z)\mid T=t)=-\frac{t_2}{m(m+1)}.$$
Also
$$E(g^2_{m,1}(T,Z)\mid T=t)=\frac{t_2}{(m+1)^2}\quad\mbox{and}\quad 
E(g^2_{m,2}(T,Z)\mid T=t)\leq c\frac{t_2^2}{m^2},$$
for some constant $c>0$. So, as long as $t_2$ is bounded for all $m$, the sums are finite. We can bound these by taking
$t_{m,2}$ to be  the max of some large constant and the generated $t_{m,2}$.

For an illustration, $n=50$ and take $T_n=0$ and $S_n^2=1$.
The fiducial posterior distributions are presented  in Fig.~\ref{fignorm}. In Fig.~\ref{figcorr} the plots of the samples are given, strongly suggesting an independence property; indeed, the sample correlation is 0.04.
This is important since the sample mean and sample variance are independent.

\begin{center}
	\begin{figure}[!htbp]
		\begin{center}
			\includegraphics[width=14cm,height=5cm]{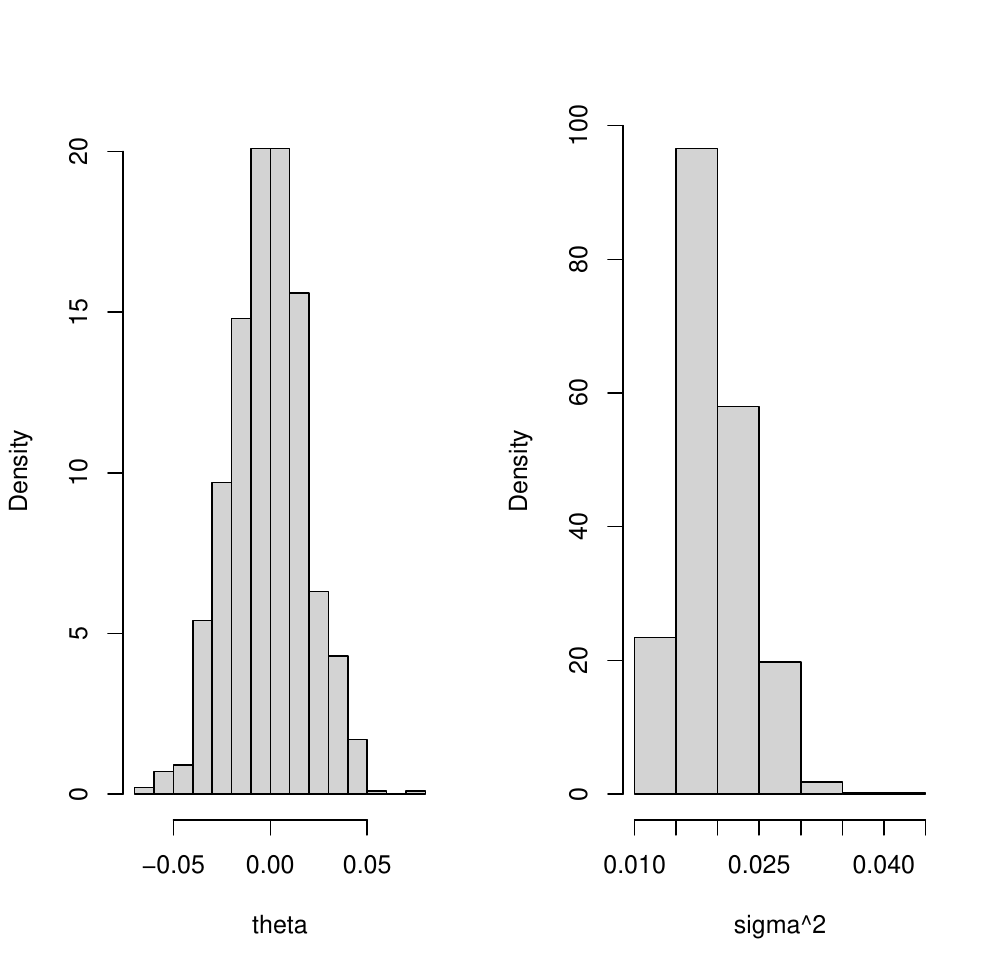}
			\caption{Fiducial posterior distributions for mean and variance parameters from Section \ref{ss:normal}}
			\label{fignorm}
		\end{center}
	\end{figure}
\end{center} 

\begin{center}
	\begin{figure}[!htbp]
		\begin{center}
			\includegraphics[width=12cm,height=5cm]{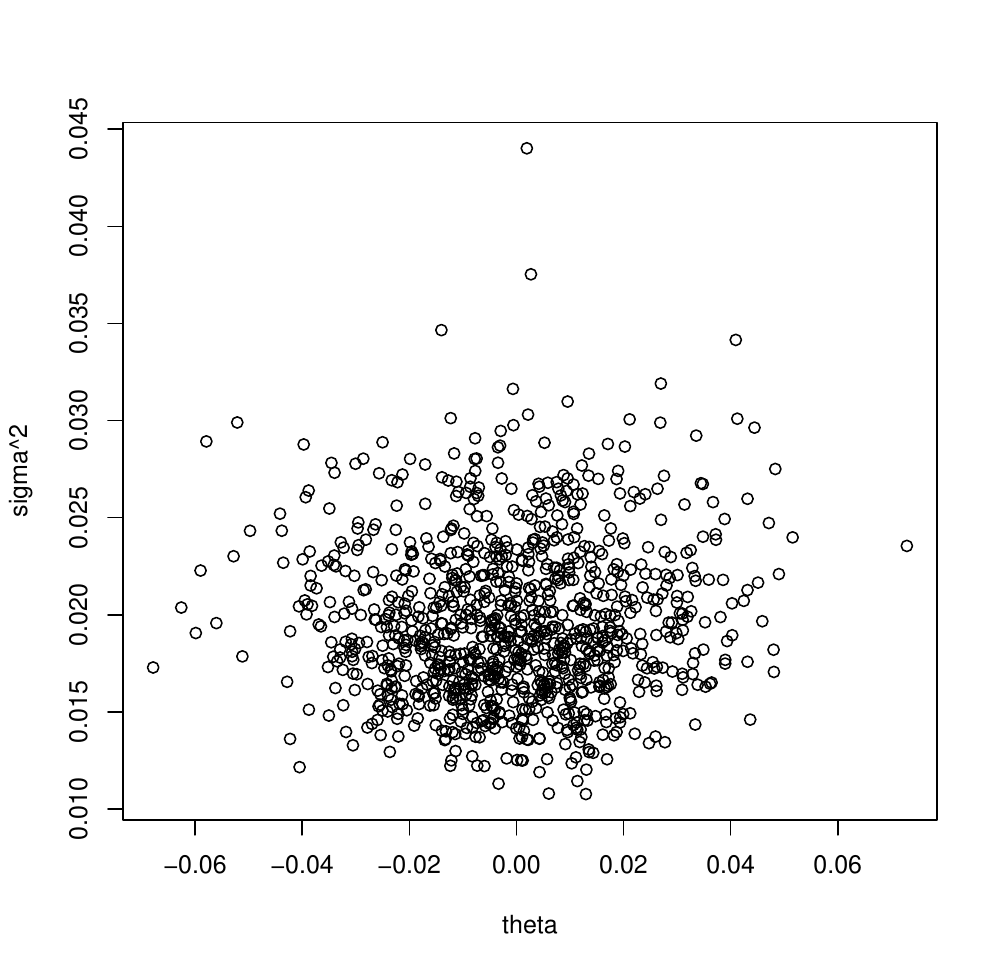}
			\caption{Fiducial posterior samples for mean and variance parameters from Section \ref{ss:normal}}
			\label{figcorr}
		\end{center}
	\end{figure}
\end{center} 

\subsection{An exponential example}\label{ss:exponential}

Assume $x_{1:n}$ are i.i.d. realizations from the exponential density $f(x\mid\theta)=\theta^{-1}\exp(-x/\theta)$. The sufficient statistic is $t_n=\bar{x}$. 

The Fisher fiducial density function is given by
an inverse gamma density with parameters $(n,n\theta_n)$.
This is also the posterior arising from the objective prior proportional to $1/\theta$. For the details, the statistic $t_n=\sum_{i=1}^n x_i$ is a realization of a gamma random variable with parameters $(n,1/\theta)$ so
the fiducial density is 
$\pi(\theta\mid x_{1:n})=-\frac{\partial}{\partial\theta}
\int_0^{t_n/\theta} s^{n-1}e^{-s}\,ds/\Gamma(n)$
which becomes the said inverse gamma density.

Our fiducial posterior is based on sampling $x_{n+1:\infty}$ via $x_{m+1}$ is taken from $f(x\mid t_m)$ and
$t_m=\bar{x}_m$. Hence, a sample from the posterior is given by
$t_\infty=t_n\,\prod_{i=n}^\infty (i+z_i)/(i+1),$
where the $(z_i)$ are realizations of independent standard exponential random variables. So $t_\infty=t_n\,z$
where 
$z=\prod_{i=n}^\infty (i+z_i)/(i+1),$
which is a realization of a finite r.v.

\begin{center}
	\begin{figure}[!htbp]
		\begin{center}
			\includegraphics[width=12cm,height=5cm]{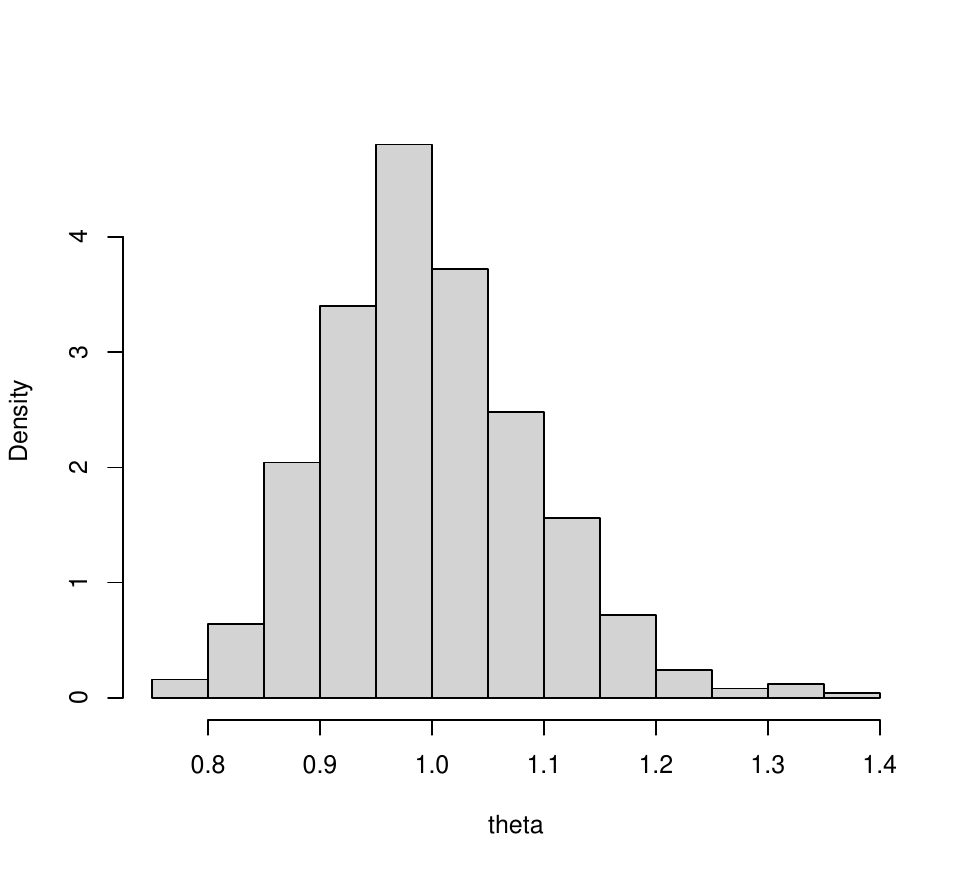}
			\caption{Fiducial posterior for exponential model from Section \ref{ss:exponential}}
			\label{figb}
		\end{center}
	\end{figure}
\end{center} 

See Fig.~\ref{figb}, where $n=100$ and $\theta_n=1$, for a fiducial posterior distribution for the exponential model.

To demonstrate almost sure convergence we can use Kakutani's theorem on product martingales, which we copy here.

\begin{theorem}\label{th: kakutani}
	Let $X_1, X_2,\ldots$ be independent non-negative r.v. each of mean 1 and define $M_0=1$ and, for $n\geq 1$, $M_n=X_1X_2\cdots X_n$. Then $M_n$ is a non-negative martingale, so that $M_\infty=\lim M_n$ exists a.s.
	The following statements are equivalent:
	\begin{enumerate}
		\item $M_n\to M_\infty$ in $L_1$.
		\item $\E(M_\infty)=1$.
		\item $\prod_{n} a_n>0$ where $0<a_n=\E(X_n^{1/2})\leq 1.$
		\item\label{eq: kakutani-cond} $\sum_n(1-a_n)<\infty$.
	\end{enumerate}
\end{theorem}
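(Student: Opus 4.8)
The statement to prove is the standard Kakutani dichotomy for product martingales. I want to think about what the proof should look like.

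First, the easy structural observation: $M_n = X_1\cdots X_n$ is a non-negative martingale with respect to the natural filtration, since $\E(M_{n+1}\mid \mathscr{F}_n) = M_n \E(X_{n+1}) = M_n$ by independence and mean-1. A non-negative martingale converges almost surely by the martingale convergence theorem (bounded in $L^1$ since $\E M_n = 1$), so $M_\infty = \lim M_n$ exists a.s. This gives the first sentence for free.

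Now the equivalences. The natural device is the auxiliary product martingale built from square roots. Set $a_n = \E(X_n^{1/2})$; by Jensen $a_n \le (\E X_n)^{1/2} = 1$, and $a_n > 0$ unless $X_n = 0$ a.s. (which we exclude, or the whole product is trivially $0$). Define $N_n = \prod_{k=1}^n X_k^{1/2}/a_k$. Then $N_n$ is again a non-negative martingale with $\E(N_n) = 1$, and crucially it is bounded in $L^2$: $\E(N_n^2) = \prod_{k=1}^n \E(X_k)/a_k^2 = \prod_{k=1}^n a_k^{-2}$. So $N_n$ is $L^2$-bounded — hence uniformly integrable and convergent in $L^2$ — if and only if $\prod_k a_k^2 > 0$, i.e. iff $\prod_k a_k > 0$.

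The plan is then to chain the implications. \textbf{(3) $\Leftrightarrow$ (4):} elementary, since $0 < a_n \le 1$, the product $\prod a_n$ converges to a positive limit iff $\sum (1-a_n) < \infty$ (take logs, compare $-\log a_n$ with $1-a_n$). \textbf{(3) $\Rightarrow$ (1):} if $\prod a_n > 0$ then $N_n$ is $L^2$-bounded, hence uniformly integrable; since $N_n^2 \ge M_n/\prod a_k^2 \ge c\, M_n$ with $c = \prod a_k^2 > 0$, the family $\{M_n\}$ is dominated by a uniformly integrable family and is therefore uniformly integrable itself, which together with a.s.\ convergence gives $M_n \to M_\infty$ in $L^1$. \textbf{(1) $\Rightarrow$ (2):} immediate, since $L^1$ convergence preserves the mean and $\E M_n = 1$ for all $n$. \textbf{(2) $\Rightarrow$ (3) (contrapositive):} suppose $\prod a_n = 0$. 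Then $\E(N_n^2) = \prod a_k^{-2} \to \infty$, but one shows $N_n \to 0$ a.s.: indeed $N_n$ converges a.s.\ to some finite limit $N_\infty$ (non-negative martingale), and if $N_\infty > 0$ with positive probability then $\sum \log(X_k^{1/2}/a_k)$ converges, forcing... the cleaner route is: $M_n = N_n^2 \prod_{k\le n} a_k^2 \to N_\infty^2 \cdot 0 = 0$ a.s.\ (using that $N_n$ has a finite a.s.\ limit and $\prod a_k^2 \to 0$), so $M_\infty = 0$ a.s.\ and $\E M_\infty = 0 \ne 1$. This closes the cycle (1) $\Rightarrow$ (2) $\Rightarrow$ (3) $\Rightarrow$ (1), with (3) $\Leftrightarrow$ (4) tacked on.

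The main obstacle — really the only non-routine point — is the step (2) $\Rightarrow$ (3), i.e.\ showing that failure of $\prod a_n > 0$ forces $M_\infty = 0$ a.s.\ rather than merely $\E M_\infty < 1$. The trick is to avoid trying to control $M_n$ directly and instead exploit that $N_n$, being a non-negative martingale, has a \emph{finite} a.s.\ limit regardless of whether it is $L^2$-bounded; then the factorisation $M_n = N_n^2 \prod_{k\le n}a_k^2$ does the work, since the second factor tends to $0$. Everything else is bookkeeping with Jensen's inequality and the martingale convergence theorem. I would also remark that this is a classical result (e.g.\ Williams, \emph{Probability with Martingales}, Section 14.12), so in the paper it suffices to cite it; the sketch above is included only for completeness.
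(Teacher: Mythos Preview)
Your proof sketch is correct and follows the standard argument (as in Williams, \emph{Probability with Martingales}, \S14.12, which you already cite). The paper itself does not prove this theorem: it introduces it with the phrase ``we can use Kakutani's theorem on product martingales, which we copy here'' and simply states it as a classical tool, so there is no in-paper proof to compare against. Your closing remark that a citation suffices is exactly what the authors do.

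One minor notational wrinkle in your step (3) $\Rightarrow$ (1): the chain ``$N_n^2 \ge M_n/\prod a_k^2 \ge c\,M_n$'' is slightly tangled. The clean inequality is simply $M_n = N_n^2 \prod_{k\le n} a_k^2 \le N_n^2$ (since each $a_k\le 1$), and since $\{N_n\}$ is $L^2$-bounded it converges in $L^2$, so $\{N_n^2\}$ is uniformly integrable, whence $\{M_n\}$ is uniformly integrable by domination. This is what you intend, but the inequality as written does not parse cleanly.
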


To show \ref{eq: kakutani-cond}.\  
holds for Kakutani's theorem on product martingales, write
$$1-a_m=\sqrt{\frac{m}{m+1}}\int_0^\infty e^{-z}\left\{\sqrt{1+1/m}-\sqrt{1+z/m}\right\}\,\rm{d} z.$$
It can be shown by standard arguments that $\sum_m(1-a_m)<\infty$. 
The term of order $1/m$ disappears since 
$\frac{1}{2}\int_0^\infty (1-z)e^{-z} \rm{d}z=0$. 
Hence, 
the product martingale defined by
$M_n=\prod_{i=1}^n (i+Z_i)/(i+1),$
where $Z_1,Z_2,\dotsc$ are independent standard exponential random variables,  
converges with probability one to $M_\infty$ and
$\E(M_\infty)=1$ and $M_n\to M_\infty\,\,\mbox{in }L_1.$
In the next section we will look at more general theory for convergence. 


\subsection{Weibull model}\label{ss:weibull}
Consider a Weibull model of the form
$f(x\mid\theta)=\theta x^{\theta-1}\,\exp\left(-x^\theta\right).$
There is no statistic here which summarizes the data based on a sample of size $n$.
However, for a single sample $x$ from the Weibull model, it is possible to express it as
$x=z^{1/\theta}$
where $z$ is a realization of a standard exponential random variable.
A problem of inverting $x=z^{1/\theta}$ is that if $x<1$ then $z<1$ is required for a $\theta>0$ to exist, and if $x>1$ then $z>1$ is required. See \cite{Hannig2016}. Hence, a sample from the \cite{Hannig2016} fiducial distribution is
$\theta=\log (z)/\log (x),$
where $z$ is a realization from a standard exponential constrained to be greater than 1 if $x>1$ while it is constrained to be less than 1 if $x<1$. This yields
$\pi_H(\theta\mid x)\propto x^\theta\,\exp(-x^\theta)$
which is the likelihood multiplied by $1/\theta$, and this $1/\theta$ can be identified as the Jeffreys prior. 

With a sample of size $n$ one will have 
$x_i=z_i^{1/\theta}$, $i=1,\ldots,n$
where the same rules apply to each $(x_i,z_i)$ as before. The problem now is that due to a lack of a statistic corresponding to the parameter, a least squares solution is provided as part of the inversion procedure; i.e.
$\theta$ minimizes the least squares
$\sum_{i=1}^n \left(\log x_i-\theta^{-1}\log u_i\right)^2$
so
$$\theta=\frac{\sum_{i=1}^n (\log u_i)^2}
{\sum_{i=1}^n (\log x_i)\,(\log u_i)}.
$$
The distribution of $\theta$ arises from the distributions of the $(u_i)$ and the solution can be found, appearing as Theorem 1 in \cite{Hannig2016}. The form is as a likelihood times prior, where the prior appears in Equation (4) of \cite{Hannig2016} and can possibly include the data itself.  
This is a somewhat unsatisfactory outcome to a principled idea first developed by Fisher.

On the other hand, here we show that a sequential interpretation of the true parameter value, as motivated by Doob, and to be defined via missing data, yields a principled fiducial distribution. This is because we only ever utilize the generation of missing data via the $x=z^{1/\theta}$ construct and even then only move from a given estimate of $\theta$ and sampled $z$ to a $x$.

The updates based on estimate $t_n$ are, for $m\geq n$,
$$
t_{m+1}=t_m+\frac{1}{m+1}\,s(x_{m+1};t_m)
$$
where 
$s(x;\theta)=1/\theta+\log x-x^\theta\,\log x,$
is the score function, the $(x_m=z_m^{1/t_{m-1}})$ and the $(z_m)$ are realizations of i.i.d. standard exponential. These updates are based on 
the score functions; i.e.
$s(x;\theta)=(\partial/\partial\theta)\log f(x\mid\theta).$ See \cite{holmes2023} for the motivation for these updates.

\begin{center}
	\begin{figure}[!htbp]
		\begin{center}
			\includegraphics[width=12cm,height=5cm]{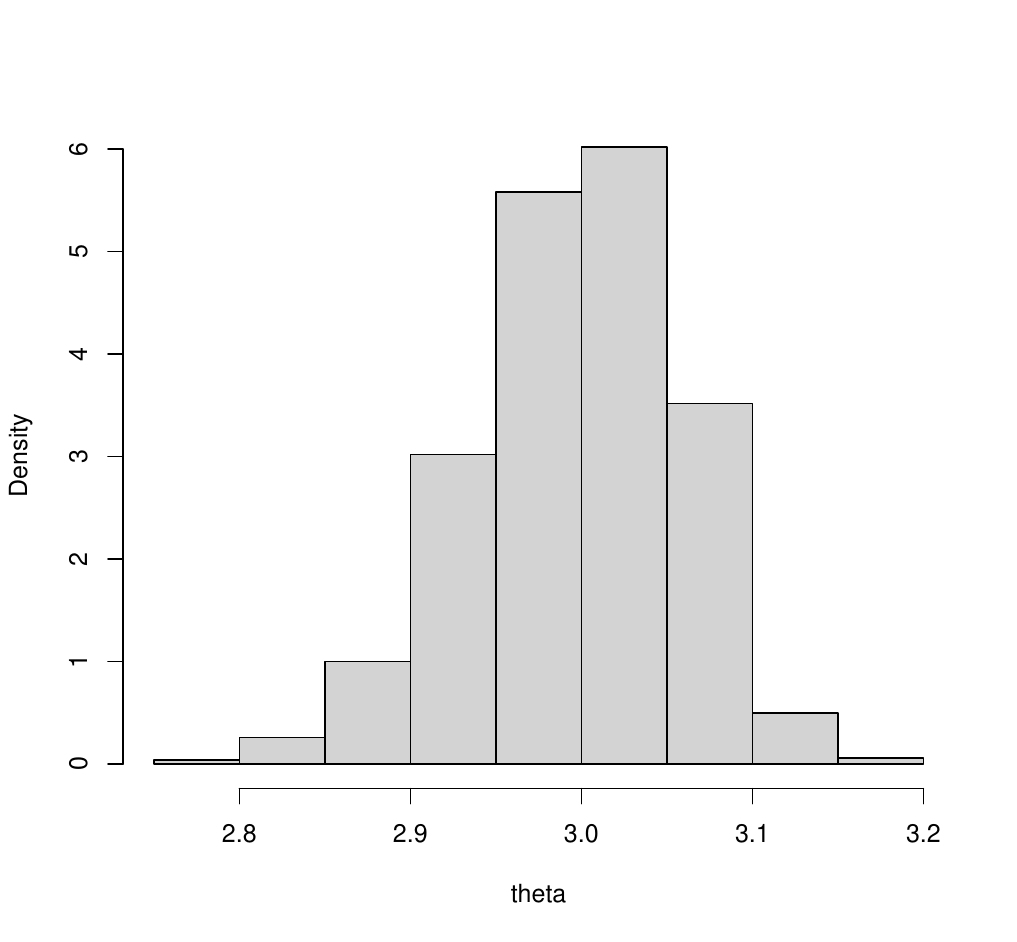}
			\caption{Fiducial posterior for Weibull model from Section \ref{ss:weibull}}
			\label{figw}
		\end{center}
	\end{figure}
\end{center} 

Assuming we have a sample of size $n=50$ with $\theta_n=3$, Fig.~\ref{figw} shows a sample of $\theta$s from our fiducial distribution, by running each sequence up to a size $m=1000$. 

\subsection{Uniform model}\label{ss:uniform}
Consider the uniform distribution on \([0,\theta]\). In this setting, a suitable estimate which is unbiased and a sufficient statistic is 
\(t_n= (n+1)/n\, x_{(n)}\), 
where \(x_{(n)}=\max\{x_1,\dotsc,x_n\}\). 
According to the sampling scheme where \(x_{m+1}=t_m\,u_m\) where \(u_m\) is a realization of a uniform random variable $U_m$ on \([0,1]\) independent of \(T_m\), we have that
\[
t_{m+1}= 
\frac{m+2}{m+1} \left( \frac{m}{m+1}\vee u_m\right) t_m
\]
where \(a \vee b\) denotes \(\max\{a,b\}\).

\begin{center}
	\begin{figure}[!htbp]
		\begin{center}
			\includegraphics[width=12cm,height=5cm]{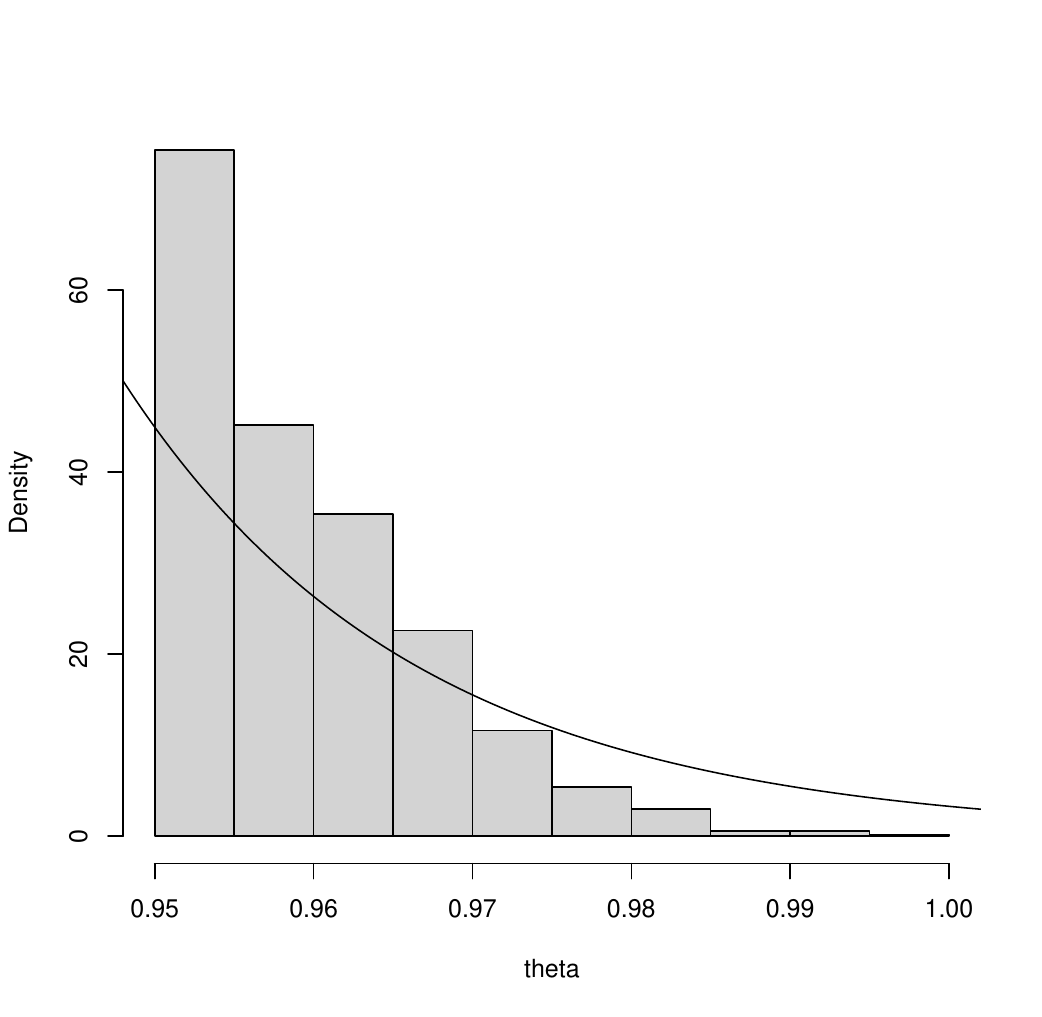}
			\caption{Fiducial posterior for Uniform model (histogram) from Section \ref{ss:uniform} alongside the Fisher fiducial distribution (line)}
			\label{figu}
		\end{center}
	\end{figure}
\end{center} 

Fig.~\ref{figu} shows our fiducial distribution as a histogram, taking $n=50$ and the $x_{(n)}=0.947$, with the true $\theta$ being 1. The Fisher fiducial distribution is shown as a line and is the Pareto density $\pi(\theta\mid x_{(n)})=n x_{(n)}^n/\theta^{n+1}\,1(\theta>x_{(n)})$.  

Almost sure convergence of \(T_m\) to a positive random variable as \(m\) diverges follows from Kakutani's Theorem 
\ref{th: kakutani}. 
To show this, set
\begin{align*}
	\mu_m&=\frac{m+2}{m+1}\E\left( \frac{m}{m+1}\vee U_m\right)
	= \frac{m+2}{m+1}\,\frac{m}{m+1}
	\left(1+\frac{1}{2m(m+1)}\right)\\
	X_m&= \frac{1}{\mu_m}\frac{m+2}{m+1}\left( \frac{m}{m+1}\vee U_m\right)
\end{align*}
so that \(\E(X_m)=1,\), and 
$
T_{m+1} = T_n \prod_{j=n}^m (\mu_j X_j),
$
for every \(m\geq n\). 
Moreover,
\begin{align}
	\E\left(\sqrt{X_m}\right)
	&= \frac{1}{\sqrt{\mu_m}}\,
	\sqrt{\frac{m+2}{m+1}}\, \nonumber
	\E\left( \sqrt{\frac{m}{m+1}}\vee \sqrt{U_m}\right)\\
	&= \frac{\frac{2}{3}\sqrt{\frac{m+1}{m}}+\frac{1}{3}\frac{m}{m+1}}{\sqrt{1+\frac{1}{2}\left(\frac{1}{m}-\frac{1}{m+1}\right)}} \label{eq: step1}
\end{align}
Since the numerator in \eqref{eq: step1} is bounded from below by one by the inequality of arithmetic and geometric means, then the following is bounded from above by the term of a (telescopic) converging sequence:
\[
1-\E\left(\sqrt{X_m}\right) \leq 
1-\frac{1}{1+\frac{1}{2}
	\left(\frac{1}{m}-\frac{1}{m+1}\right)}\leq
\frac{1}{2}
\left(\frac{1}{m}-\frac{1}{m+1}\right)
\]
By Kakutani Theorem \ref{th: kakutani}, the product 
\(\prod_{m=n}^\infty X_m\) converges almost surely to a positive random variable and its expectation is equal to one. Moreover,
\[
\prod_{m=n}^\infty \mu_m = \frac{n}{n+1}
\exp\left\{\sum_{m=n}^\infty 
\log\left(1+\frac{1}{2}
\left(\frac{1}{m}-\frac{1}{m+1}\right)\right)\right\}
\]
So that by the inequality \(\log(1+t)\leq t\), we have that
\[\frac{n}{n+1} \leq \prod_{m=n}^\infty \mu_m 
\leq \frac{n}{n+1}\,e^{1/(2n)}.\]
As a consequence, \(T_m\) converges almost surely to a positive random variable \(T_\infty\) as \(m\) diverges to infinity such that
$
\E(T_\infty\mid T_n=t_n) = t_n\,\prod_{m=n}^\infty \mu_m.
$
This means that the expectation of our fiducial distribution is between \(x_{(n)}\) and \(x_{(n)}\,e^{1/(2n)}\). 
Instead, the mean of Fisher fiducial distribution is \(n/(n-1)x_{(n)}\). Of course, the difference between these two estimators is negligible for large \(n\).

Our approach allows us to find the fiducial distribution of both parameters of the uniform distribution. Consider the uniform distribution on \([a-b,a+b]\). In this setting, unbiased estimates for the parameters \(a\) and \(b\) based on sufficient statistics are 
\begin{align*}
	a_n&= \frac{1}{2} (\min\{x_1,\dotsc,x_n\}+
	\max\{x_1,\dotsc,x_n\})\\
	b_n&= \frac{1}{2}\, \frac{n+1}{n-1}\, 
	\left(\max\{x_1,\dotsc,x_n\}-
	\min\{x_1,\dotsc,x_n\}\right)
\end{align*}
We consider the sampling scheme where 
\(x_{m+1} = a_m-b_m +2b_m\,u_m\), \(u_m\) is a realization from a uniform random variable \(U_m\) on \([0,1]\) and generated independently of \((a_m,b_m)\), and \(U_n,U_{n+1},\dotsc\) are independent. If we denote by 
\(x^+=x\vee 0\) the positive part of \(x\) for every \(x\in\BR\), the updates are:
\begin{align}
	\label{eq: uniform_updatingA}
	a_{m+1}&=a_m +b_m\left[\left(u_m-\frac{m}{m+1}\right)^+
	-\left(\frac{1}{m+1}-u_m\right)^+\right]  \\
	\begin{split}
		b_{m+1}&=b_m\left(1-\frac{2}{m(m+1)}\right)
		\label{eq: uniform_updatingB}
		\\ 
		&\phantom{XX}+b_m \frac{m+2}{m}
		\left[\left(u_m-\frac{m}{m+1}\right)^+
		+\left(\frac{1}{m+1}-u_m\right)^+\right],\end{split}
\end{align}
for \(m\geq n\).

If we let \(w_m=(1/(m+1)-u_m)^+\) and 
\(v_m= (u_m-m/(m+1))^+\), we have that 
\begin{align*}
	a_{m+1}&=a_m+b_m(v_m-w_m)\\
	b_{m+1}&= b_m\left\{1-\frac{2}{m(m+1)}+\frac{m+2}{m}(v_m+w_m)\right\}.
\end{align*}
We have
$\log(b_{m+1})=g_{m,1}(u_m) + \log(b_{m})$
where
\[g_{m,1}(u)= \log\left(1-\frac{2}{m(m+1)}+\frac{m+2}{m}
\left[\left(u-\frac{m}{m+1}\right)^+
+\left(\frac{1}{m+1}-u\right)^+\right]
\right)\]

The random sequence \((B_m)\), whose realization \((b_m)\) is generated recursively via \eqref{eq: uniform_updatingB}, converges almost surely 
by Theorem \ref{th: gen} since the two series 
\(\sum_{m=n} \E(g_{m,1}(U_m))\) and 
\(\sum_{m=n} \E(g_{m,1}(U_m)^2)\) converge. 
In order to verify this fact, 
consider that:
\begin{align*}
	g_{m,1}(u) &\leq 
	\log\left(1-\frac{2}{m(m+1)}+\frac{m+2}{m}
	\frac{1}{m+1}
	\left[ \ind_{[0,\frac{1}{m+1}]}(u)+
	\ind_{[\frac{m}{m+1},1]}(u) 
	\right]\right)\\
	g_{m,1}(u) &\geq 
	\log\left(1-\frac{2}{m(m+1)}\right)
\end{align*}
so that:
\begin{equation*}
	\begin{split}
		\E(g_{m,1}(U_m)) &\leq 
		\log\left(1-\frac{2}{m(m+1)}\right) \frac{m-1}{m+1}
		+ \log\left(1+\frac{1}{m+1}\right) \frac{2}{m+1} \\
		\E(g_{m,1}(U_m)) &\geq \log\left(1-\frac{2}{m(m+1)}\right)\\
		\E((g_{m,1}(U_m))^2) &\leq \left(\log\left(1-\frac{2}{m(m+1)}\right)\right)^2
		+ \left(\log\left(1+\frac{1}{m+1}\right)\right)^2 \frac{2}{m+1},
	\end{split}
\end{equation*}
which implies convergence of the two series 
$$\sum_{m=n} \E(g_{m,1}(U_m))\quad\mbox{and}\quad
\sum_{m=n} \E(g_{m,1}(U_m)^2)$$ since 
\(\lim_{x\to 0} \log(1+x)/x=1\).

At this stage, note that \(\E(W_m)=\E(V_m)=1/\{2(m+1)^2\}\) if $W_m=(1/(m+1)-U_m)^+$ and $V_m=(U_m-m/(m+1))^+$. 
Letting \(g_2(b_m, u_m)=b_m(v_m-w_m)\) we have that
\begin{align*}
	\E(g_2(B_m, U_m)\mid B_m)&=0,\\ 
	\E(g_2(B_m, U_m)^2 \mid B_m) &\leq B^2_m\E(V_m+W_m)
	=\frac{B^2_m}{(m+1)^2}
\end{align*}
and therefore the random sequence \((A_m)\), whose realization \((a_m)\) are generated recursively by \eqref{eq: uniform_updatingA}, converges almost surely by virtue of Theorem \ref{th: gen}.

\section{Nonlinear regression model}

Consider the following regression model,
$
Y= g(X, Z, \theta)
$
where \(g\) is a known regression function, \(Y\in \BR\) is the response, \(X\in\BR^p\) are the covariates, 
\(\theta\in\BR^d\) is the unknown parameter, \(Z\in\BR\) is the error following a known distribution and independent of \(X\). 

Given observed data \((x_1,y_1),\dotsc,(x_n,y_n)\), an estimate \(\hat{\theta}_n\) of \(\theta\) is obtained, for example by minimizing 
\(\sum_{i=1}^n L(\theta,x_i,y_i)/n\), where 
\begin{equation}\label{eq: L}
L(\theta,x,y)=(y-\mu(x,\theta))^2
\end{equation}
and 
\(
\mu(x,\theta)=\E(g(x, Z, \theta)).
\)
In other words,
\[
\hat{\theta}_n= \arg\min_{\theta\in\BR^d} \frac{1}{n}\sum_{i=1}^n
(y_i-\mu(x_i,\theta))^2
\]
We model the sequence of future observations as an asymptotically exchangeable random sequence \((X_{n+1},Y_{n+1}),(X_{n+2},Y_{n+2}),\dotsc\) with the aim to collect posterior samples of the random vector \(T\) such that:
\begin{equation}\label{eq: T}
T=\arg\min_{\theta\in\BR^d} \int_{\BR^p\times \BR} (y-\mu(x,\theta))^2\, \nu_\infty(\d x,\d y)
\end{equation}
almost surely (provided that \(T\) exists and is almost surely unique), where \(\nu_\infty\) is the almost sure weak limit of the empirical measure \(\nu_m\), as \(m\) diverges to infinity,
$
\nu_m=m^{-1}\sum_{i=1}^m \delta_{(X_i,Y_i)}. 
$ 
From \eqref{eq: L}, for \(j=1,\dotsc,d\), 
\begin{equation*}
	L'_{\theta_j}(\theta, x, y) = 2(y-\mu(x,\theta))\mu'_{\theta_j}(x,\theta),
\end{equation*}
future observations are obtained through the following recursive update based on stochastic gradient descent:
\begin{align}
Y_{m+1}&= g(X_{m+1}, Z_{m+1}, T_m) \label{eq: Ym}\\
T_{m+1,j}&= T_{m,j}- \frac{(Y_{m+1}-\mu(X_{m+1},T_{m}))\mu'_{\theta_j}(X_{m+1},T_m)}{
(m+1)\varphi(X_{m+1}, T_m)}
\label{eq: Tm}
\end{align}
for \(m=n,n+1,\dotsc\), where:
\[
\varphi(x,\theta)= 
\sqrt{\mu_2(x,\theta)-\mu(x,\theta)^2}
\max_{j=1,\dotsc,d}\lvert \mu'_{\theta_j}(x, \theta) \rvert,
\]
\(
\mu_2(x,\theta)=\E(g^2(x, Z, \theta)), 
\)
\(Z_{n+1},Z_{n+2},\dotsc\) are i.i.d. and distributed as \(Z\), and \(X_{n+1},X_{n+2},\dotsc\) are obtained through the Bayesian bootstrap (starting from \(x_1,\dotsc,x_n\)), and independent of \(Z_{n+1},Z_{n+2},\dotsc\). 

\begin{theorem}
Assume that \(\mu(x,\theta)\) is invertible as a function of \(\theta\), for every \(x\in\BR^p\). 
If \eqref{eq: Ym} and \eqref{eq: Tm} hold then 
\(\lim_{m\to \infty}T_m=T\), almost surely, where \(T\) satisfies \eqref{eq: T}. 
Moreover, the random sequence \((X_{n+1},Y_{n+1}),(X_{n+2},Y_{n+2}),\dotsc\) is asymptotically exchangeable. 
\end{theorem}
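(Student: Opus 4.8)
The plan is to establish the two assertions separately: almost sure convergence of $(T_m)_{m\ge n}$ to a limit $T$ solving \eqref{eq: T}, and asymptotic exchangeability of $(X_m,Y_m)_{m>n}$. I would let $\mF_m$ be the sigma-field generated by the observed data together with $X_{n+1},\dots,X_m,Z_{n+1},\dots,Z_m$, so that $T_m$ is $\mF_m$-measurable, $X_{m+1}$ is drawn by the Bayesian bootstrap from a law depending only on the $\mF_m$-measurable configuration $x_1,\dots,x_n,X_{n+1},\dots,X_m$, and $Z_{m+1}$ is independent of $\mF_m$ and of $X_{m+1}$ and distributed as $Z$. Conditioning on $(\mF_m,X_{m+1})$ and integrating out $Z_{m+1}$ by Lemma \ref{lemma}, the defining identities $\mu(x,\theta)=\E(g(x,Z,\theta))$ and $\mu_2(x,\theta)=\E(g^2(x,Z,\theta))$ give $\E(Y_{m+1}-\mu(X_{m+1},T_m)\mid\mF_m,X_{m+1})=0$ and $\E((Y_{m+1}-\mu(X_{m+1},T_m))^2\mid\mF_m,X_{m+1})=\mu_2(X_{m+1},T_m)-\mu(X_{m+1},T_m)^2$.

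It then follows, writing $\Delta_{m+1,j}=T_{m+1,j}-T_{m,j}$ from \eqref{eq: Tm}, that each coordinate of $(T_m)$ is a martingale with $\E(\Delta_{m+1,j}\mid\mF_m)=0$, while the normalisation by $\varphi$ makes
\[
\E(\Delta_{m+1,j}^2\mid\mF_m)=\frac{1}{(m+1)^2}\,\E\!\left(\frac{\mu'_{\theta_j}(X_{m+1},T_m)^2}{\max_{k}\mu'_{\theta_k}(X_{m+1},T_m)^2}\,\Big|\,\mF_m\right)\le\frac{1}{(m+1)^2}.
\]
Hence $\sum_m\E(\Delta_{m+1,j}\mid\mF_m)=0$ and $\sum_m\E(\Delta_{m+1,j}^2\mid\mF_m)\le\sum_m(m+1)^{-2}<\infty$ almost surely, and I would invoke Corollary \ref{cor} coordinatewise, with $Y_m=\Delta_{m+1,j}$ and filtration $(\mF_m)$, to get that $\sum_m\Delta_{m+1,j}$ converges almost surely; collecting the $d$ coordinates yields $T_m\to T$ almost surely for some random vector $T$. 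Note Theorem \ref{th: gen} cannot be quoted directly, since the bootstrap innovations $X_{n+1},X_{n+2},\dots$ are exchangeable rather than independent, but Corollary \ref{cor} uses only $\mF_m$-measurability.

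Next I would identify $T$ with the minimiser in \eqref{eq: T}. Since the Bayesian bootstrap is a P\'olya urn, $X_{n+1},X_{n+2},\dots$ is exchangeable and its empirical measure converges almost surely to a random $\nu_\infty^X$; conditionally on the tail sigma-field $\mathcal{T}$, which also determines $T$, the tail $X_m$ are i.i.d.\ $\nu_\infty^X$ and the $Z_m$ are i.i.d.\ $\sim Z$. Because $T_{m-1}\to T$ and $g$ is continuous, the averages $m^{-1}\sum_{i\le m}f(X_i,g(X_i,Z_i,T_{i-1}))$ and $m^{-1}\sum_{i\le m}f(X_i,g(X_i,Z_i,T))$ share the same limit for bounded continuous $f$, so $\nu_m\to\nu_\infty$ almost surely, where $\nu_\infty$ is, conditionally on $\mathcal{T}$, the law of $(\widetilde X,g(\widetilde X,\widetilde Z,T))$ with $\widetilde X\sim\nu_\infty^X$ and $\widetilde Z\sim Z$ independent. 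In particular the conditional mean of $y$ given $x$ under $\nu_\infty$ is $\mu(x,T)$, so for every $\theta$
\[
\int(y-\mu(x,\theta))^2\,\nu_\infty(\d x,\d y)=\int\!\big(\mu_2(x,T)-\mu(x,T)^2\big)\,\nu_\infty^X(\d x)+\int\!\big(\mu(x,T)-\mu(x,\theta)\big)^2\,\nu_\infty^X(\d x),
\]
the cross term vanishing by the tower property; the right-hand side is minimised precisely when $\mu(x,\theta)=\mu(x,T)$ $\nu_\infty^X$-almost everywhere, and invertibility of $\mu(x,\cdot)$ forces $\theta=T$, so $T$ is the almost surely unique solution of \eqref{eq: T}. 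For asymptotic exchangeability I would observe that $(X_m,Z_m)_{m>n}$ is exchangeable, so each shift $(X_{m+k},Z_{m+k})_{k\ge1}$ has the same exchangeable law; since $(X_{m+k},Y_{m+k})_{k\ge1}=(X_{m+k},g(X_{m+k},Z_{m+k},T_{m+k-1}))_{k\ge1}$ with $T_{m+k-1}\to T$ as $m\to\infty$ for each fixed $k$, a Slutsky argument gives convergence in distribution of $(X_{m+k},Y_{m+k})_{k\ge1}$ to $(\widetilde X_k,g(\widetilde X_k,\widetilde Z_k,T))_{k\ge1}$, which is i.i.d.\ conditionally on $\mathcal{T}$ and therefore exchangeable.

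The main obstacle will be the \emph{moving target} in the recursion: the sample $(X_{m+1},Y_{m+1})$ is produced from the running estimate $T_m$ rather than a fixed parameter, so justifying that the early iterates do not perturb the limiting measure $\nu_\infty$ needs care (a uniform continuity or mild boundedness hypothesis on $g$, together with the fact that $T$ is itself a tail functional of the whole generated sequence), and one must also supply the existence and almost sure uniqueness of the minimiser in \eqref{eq: T} that the invertibility assumption is intended to provide. By contrast, the martingale convergence of $(T_m)$ and the asymptotic exchangeability are comparatively routine once the filtration $(\mF_m)$ and the $\varphi$-normalisation built into \eqref{eq: Tm} are in place.
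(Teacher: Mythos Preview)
Your treatment of the convergence of $(T_m)$ and the identification of the limit $T$ as the minimiser in \eqref{eq: T} is essentially what the paper does: the paper also notes that each $T_{m,j}$ is a martingale whose one–step conditional variance is bounded by $(m+1)^{-2}$, concludes $\sup_m\var(T_{m,j})<\infty$ and invokes $L^2$–bounded martingale convergence (you route this through Corollary~\ref{cor}, which is equivalent), and then uses exactly your bias–variance decomposition of $\int(y-\mu(x,\theta))^2\,\nu_\infty(\d x,\d y)$ together with injectivity of $\theta\mapsto\mu(x,\theta)$ to conclude $T=\tilde T$.

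Where you diverge from the paper is the asymptotic–exchangeability step, and here your Slutsky route has a real gap that the paper's approach avoids. Your argument needs the limiting sequence $(\tilde X_k,g(\tilde X_k,\tilde Z_k,T))_{k\ge1}$ to be exchangeable, which in turn asks that $T$ be measurable with respect to the exchangeable (equivalently tail) $\sigma$–field of $(X_j,Z_j)_{j>n}$; but $T$ is the limit of an order–dependent recursion, and nothing you have written establishes that $T$ is a symmetric (or tail) functional of the innovations. You flag this yourself as ``the main obstacle,'' but it is not a side technicality: without it, the Slutsky step only shows that the shifted sequence is close to a sequence that may \emph{not} be exchangeable. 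The paper sidesteps this entirely by appealing to Lemma~8.2 of \cite{Aldous85}, which only requires convergence of the one–step predictive law. Concretely, the paper introduces an auxiliary $X_0$ that is exchangeable with the bootstrap sequence, rewrites $\E(h_1(X_{m+1})h_2(Y_{m+1})\mid X_{1:m},Y_{1:m})=\E(h_1(X_0)\,\zeta(X_0,T_m)\mid X_{1:m},Y_{1:m})$ with $\zeta(x,\theta)=\E(h_2\circ g(x,Z,\theta))$, and then passes to the limit using $T_m\to T$ and the dominated–convergence theorem for conditional expectations. This yields the directing measure $\nu$ (the law of $(\tilde X,g(\tilde X,Z,T))$ with $\tilde X$ drawn from the Dirichlet–weighted empirical and $Z$ independent) as a $\sigma(X_{1:\infty},Y_{1:\infty})$–measurable random measure, and Aldous's lemma delivers asymptotic exchangeability without ever asking whether $T$ is tail–measurable. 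If you want to keep your overall outline, the cleanest fix is to replace the Slutsky paragraph by this one–step–predictive argument.
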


\begin{proof}

For each \(j=1,\dotsc,d\), \((T_{n+1,j},T_{n+2,j},\dotsc)\) is a martingale with respect to the filtration \((\mathscr{G}_{n+1},\mathscr{G}_{n+2},\dotsc)\) where 
\[\mathscr{G}_m=\sigma(Z_{n+1},\dotsc,Z_m,Y_{n+1},\dotsc,Y_m,X_{n+1},\dotsc,X_{m+1})\] for \(m=n+1,n+2,\dotsc\)
Moreover, by \eqref{eq: Tm}, we have that:
\begin{equation*}\begin{split}
\var(T_{m+1,j})&= \var(\E(T_{m+1,j}\mid \mathscr{G}_m))+
\E(\var(T_{m+1,j}\mid \mathscr{G}_m))\\
&\leq \var(T_{m,j})+\frac{1}{(m+1)^2}
\end{split}
\end{equation*}
so that for every \(m=n+1,n+2,\dotsc\), and \(j=1,\dotsc,d\),:
\[
\var(T_{m,j})\leq\sum_{\ell=n+1}^{m}\frac{1}{\ell^2}
< \sum_{\ell=n+1}^{\infty}\frac{1}{\ell^2}<\infty
\]
and therefore there exists a random variable \(T_j\) such that \(\lim_{m\to \infty}T_{m,j}=T_j\), almost surely, for \(j=1,\dotsc,d\). 

By Lemma 8.2 in \cite{Aldous85}, to prove asymptotic exchangeability we need to prove that there exists a random probability measure \(\nu\) on \(\BR^p\times \BR\) such that for every 
bounded and continuous function \(h_1:\BR^p\to \BR\) and \(h_2:\BR\to\BR\) such that:
\begin{equation}\label{eq: asympt-exch}
\lim_{m\to \infty}\E(h_1(X_{m+1})\,h_2(Y_{m+1})\mid X_{1:m},Y_{1:m})	
= \int_{\BR^p\times \BR} h_1(x)\,h_2(y)\, \nu(\d x,\, \d y)
\end{equation} 
To this aim, note that:
\begin{equation}\label{eq: step-1}
\begin{split}
\E(h_1(X_{m+1})\,&h_2(Y_{m+1})\mid X_{1:m},Y_{1:m})\\
&= \E(h_1(X_{m+1})\,\zeta(X_{m+1}, T_m)\mid X_{1:m},Y_{1:m})
\end{split}\end{equation}
where
$\zeta(x,\theta)=\E(h_2\circ g (x,Z,\theta))$.	
At this stage, recall that the sequence \(X_{n+1},X_{n+2},\dotsc\) obtained through the Bayesian bootstrap is exchangeable. Let \(X_0\) be a random variable independent of \(Z_{1:\infty}\) and of \(Y_{1:\infty}\) such that the sequence \((X_0,X_{n+1},X_{n+2},\dotsc)\) is exchangeable. In this way, the conditional distribution of \(X_0\) given \(X_{1:m}\) is the same of \(X_{m+1}\) given \(X_{1:m}\).  
Therefore,  \eqref{eq: step-1} becomes:
\begin{equation}\label{eq: step-2}
\E(h_1(X_{m+1})\,h_2(Y_{m+1})\mid X_{1:m},Y_{1:m})
= \E(h_1(X_{0})\,\zeta(X_{0}, T_m)\mid X_{1:m},Y_{1:m})
\end{equation}
At this stage, recall that \(\lim_{m\to \infty}T_m=T\), almost surely. 
By the dominated convergence theorem for conditional expectations \citep[see for instance Theorem 4.6.10 in][]{Durrett19}, \eqref{eq: step-2} yields:
\begin{equation*}
\lim_{m\to \infty}\E(h_1(X_{m+1})\,h_2(Y_{m+1})\mid X_{1:m},Y_{1:m})
= \E(h_1(X_{0})\,\zeta(X_{0}, T)\mid X_{1:\infty},Y_{1:\infty})
\end{equation*}
Note that \(T\) is measurable with respect to \(\sigma(X_{1:\infty},Y_{1:\infty})\), and the conditional distribution of \(X_0\) given \(X_{1:\infty},Y_{1:\infty}\) is almost surely the weak limit of the conditional distribution of \(X_{m+1}\) 
given \(X_{1:m}\) as \(m\) diverges to infinity and such limit is the de Finetti's measure of the sequence \(X_{n+1:\infty}\). Therefore, we obtain that 
\eqref{eq: asympt-exch} holds where \(\nu\) is the distribution of a random pair \((\tilde{X},\tilde{Y})\) such that
$\P(\tilde{X}=x^*_j)=P_j$, $j=1,\dotsc,k,$	
$\tilde{Y}=g(\tilde{X}, Z, T)$,
where 
\(x^*_1 \dotsc, x^*_k\) denote the distinct values of 
\(x_1,\dotsc,x_n\) in increasing order and \(n_j\) the number of times that the value \(x^*_j\) occurs in our data, for \(j=1,\dotsc,k\), 
\((P_1,\dotsc,P_{k-1})\) is a Dirichlet random vector with parameters \(n_1,\dotsc,n_k\) and \(P_k=1-\sum_{j=1}^k P_j\), and 
 \(Z\) is independent of 
\((\tilde{X},T)\). 

To conclude the proof we need to show that \(T\) satisfies \eqref{eq: T}, 
namely:
\begin{equation}\label{eq: optim}
T=\arg\min_{\theta\in\BR^d} \E(\tilde{Y}-\mu(\tilde{X},\theta))^2. 
\end{equation} 
If the random vector \(\tilde{T}\) is a solution of \eqref{eq: optim}, we have that
\[\begin{split}
\E(\tilde{Y}-\mu(\tilde{X},\tilde{T}))^2 &= 
\E( \E(\tilde{Y}-\mu(\tilde{X},\tilde{T})^2\mid \tilde{X}))\\
&= \E(\var(\tilde{Y}\mid \tilde{X}))+
\E(\mu(\tilde{X},T)-\mu(\tilde{X},\tilde{T}))^2 
\end{split}\]
so that \(\mu(\tilde{X},T)=\mu(\tilde{X},\tilde{T})\), almost surely. 
Being \(\mu(x,\theta)\)  invertible as a function of \(\theta\) for every \(x\in\BR^p\), we have that \(T=\tilde{T}\) almost surely and the proof is complete. 

\end{proof}

\subsection{Logistic regression}

To illustrate the above family of models, we consider a logistic regression model, i.e.
$$P(Y=1\mid x)=\frac{e^{x'\beta}}{1+e^{x'\beta}}$$
where $x$ is a $p$ dimensional vector of covariates and  $\beta$ a $p$ dimensional vector of parameters. The function
$g(X,Z,T)$ can be represented as $g(x,Z,\beta)=1(Z<P(Y=1\mid x))$ where $Z$ is an independent uniform random variable on $(0,1)$ and $\mu(x,\beta)=P(Y=1\mid x)$. It is easy to check that
$$\mu_j'(x,\beta)=\frac{x_j\,e^{x'\beta}}{(1+e^{x'\beta})^2}$$
and
$$\phi(x,\beta)=\sqrt{\mu(x,\beta)(1-\mu(x,\beta))}\,\frac{e^{x'\beta}}{(1+e^{x'\beta})^2}\,\max_j|x_j|.$$
Hence, one we have an estimator based on an observed sample of size $n$, say $\beta_n$, we can implement the algorithm generated by (\ref{eq: Ym}) and (\ref{eq: Tm}) to generate samples from the fiducial posterior distribution. 

\begin{center}
	\begin{figure}[!htbp]
		\begin{center}
			\includegraphics[width=14cm,height=16cm]{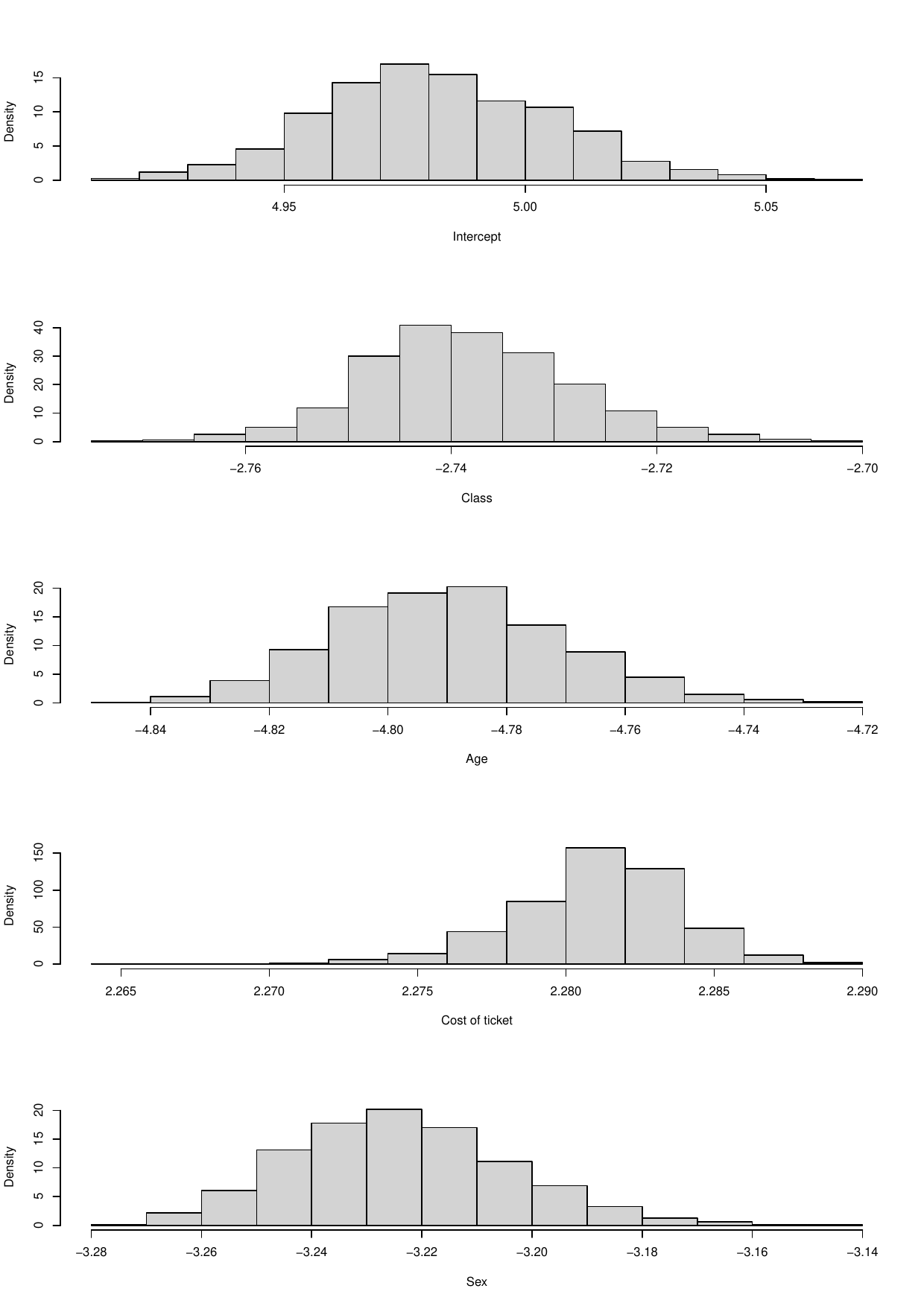}
			\caption{Fiducial posterior for regression parameters for the Titanic dataset}
			\label{figl}
		\end{center}
	\end{figure}
\end{center} 

We illustrate with the Titanic data set, which is available on CRAN, for example. The data consists of $n=887$ individuals and records their survival or otherwise along with covariate information including age, sex (male=1), cost of ticket, and class of travel, of which there are three types, 1,2 and 3. The covariates are all non-negative and standardized to lie between 0 and 1. We also include an intercept term, and so there are 5 unknown parameter values, which are estimated as 
$\widehat\beta=(4.98,-2.74,-4.79,2.28,-3.22).$
We then ran the algorithm generated by (\ref{eq: Ym}) and (\ref{eq: Tm}) for a length of size $n+1000$ with the starting point being $n+1$ and starting with $\beta_n=\widehat\beta$. 
A repetition of this over a 1000 runs leads to fiducial posterior samples from which we can construct histograms; see Fig.~\ref{figl}.

\section{Discussion}

Our approach to fiducial inference is general and based on the idea that we can generate a sequence of statistics, say $(T_m)_{m>n}$, where $T_n=t_n$ is observed. It is desirable to have
$T_{m+1}=H_m(T_m,z_m)$
for some sequence of functions $(H_m)$, where the $(Z_m)$ are a sequence of independent variables and independent of the $(T_m)$. There are many cases when this holds. For example, if $X_{m+1}=G(T_m,Z_m)$ and $T_{m+1}=G_m(T_m,X_{m+1})$.

The following procedure could always be used.
Consider the model in \cite{Fong2023} where the focus is on the distribution function and updated via
$$F_{m+1}(x)=(1-a_m)F_m(x)+a_m\,\Phi\left(\frac{\Phi^{-1}(F_m(x))-\rho \Phi^{-1}(F_m(X_{m+1}))}{\sqrt{1-\rho^2}}  \right),$$
where $X_{m+1}\sim F_m$. Hence, we can write
$$F_{m+1}(x)=(1-a_m)F_m(x)+a_m\,\Phi\left(\frac{\Phi^{-1}(F_m(x))-\rho\,Z_{m}}{\sqrt{1-\rho^2}}  \right),$$
where the $(Z_m)_{m>n}$ are independent standard normal random variables. Hence, in this case, we can write
$F_{m+1}=H_m(F_m,Z_m),\quad m\geq n.$
The limit $F_\infty$ is a sample from the Doob fiducial distribution. 
To adjust to a statistic $T_m$ it would be typical that
$T_m=\int g(x)\,d F_m(x)$
in which case, grouping $S_m=(T_m,F_m)$, we have 
$S_{m+1}=H_m(S_m,z_m).$


\end{document}